\newtheorem{theorem}{Theorem}[section]
\newtheorem{cor}{Corollary}[section]
\newtheorem{definition}{Definition}
\newtheorem{assumption}{Assumption}[section]
\newtheorem{algo}{Algorithm}[section]
\numberwithin{equation}{section}
\numberwithin{theorem}{section}
\numberwithin{lemma}{section}
\numberwithin{pro}{section}
\numberwithin{cor}{section}
\numberwithin{definition}{section}
\numberwithin{cons}{section}
\numberwithin{rem}{section}
\numberwithin{exa}{section}
\numberwithin{table}{section}
\numberwithin{figure}{section}
\numberwithin{algo}{section}
\newcommand{\Z}{\mathbf{Z}}
\def\N{{\mathbb N}}
\def\Z{{\mathbb Z}}
\def\E{{\rm E}}
\def\R{{\mathbb R}}
\def\cals_+{{\cals_+}}
\def\calb{{\mathcal{B}}}
\def\cala{{\mathcal{A}}}
\def\call{{\mathcal{L}}}
\def\cals{{\mathcal{S}}}
\def\caln{{\mathcal{N}}}
\newcommand{\wh}{\hat}
\newcommand{\spa}{{\rm \overline{sp}}}
\def\beq{\begin{equation}}
\def\eeq{\end{equation}}
\def\bals{\begin{align*}}
\def\eals{\end{align*}}
\def\bal{\begin{align}}
\def\eal{\end{align}}
\begin{document}

\title{%On the estimation of functional moving average processes
Estimating functional time series by moving average model fitting\footnote{This research was partially supported by NSF grants DMS 1305858 and DMS 1407530}
}

\author{
Alexander Aue\footnote{Department of Statistics, University of California, Davis, CA 95616, USA, email: \tt{aaue@ ucdavis.edu}}
\and Johannes Klepsch\footnote{Center for Mathematical Sciences, Technische Universit\"at M\"unchen,  85748 Garching, Boltzmannstra{\ss}e 3, Germany, email: \tt{j.klepsch@tum.de}}\;\footnote{Corresponding author}
}

\date{\today}
\maketitle
\bibliographystyle{plain}

\begin{abstract}
\setlength{\baselineskip}{1.8em}
Functional time series have become an integral part of both functional data and time series analysis. Important contributions to methodology, theory and application for the prediction of future trajectories and the estimation of functional time series parameters have been made in the recent past. This paper continues this line of research by proposing a first principled approach to estimate invertible functional time series by fitting functional  moving average processes. The idea is to estimate the coefficient operators in a functional linear filter. To do this a functional Innovations Algorithm is utilized as a starting point to estimate the corresponding moving average operators via suitable projections into principal directions. In order to establish consistency of the proposed estimators, asymptotic theory is developed for increasing subspaces of these principal directions. For practical purposes, several strategies to select the number of principal directions to include in the estimation procedure as well as the choice of order of the functional moving average process are discussed. Their empirical performance is evaluated through simulations and an application to vehicle traffic data. \medskip \\
\noindent {\bf Keywords:} Dimension reduction; Estimation, Functional data analysis; Functional linear process; Functional time series,
Hilbert spaces; Innovations Algorithm, Moving average process

\noindent {\bf MSC 2010:} Primary: 62M10, 62M15, 62M20; Secondary: 62H25, 60G25
\end{abstract}

\setlength{\baselineskip}{1.8em}

%%%%%%%%%%%%%%%%%%%%%%%%
\section{Introduction}
\label{sec:intro}
%%%%%%%%%%%%%%%%%%%%%%%%

With the advent of complex data came the need for %statistical
methods to address %the 
novel statistical challenges. Among the new methodologies, functional data analysis  provides a particular set of tools for tackling questions related to observations conveniently viewed as entire curves rather than individual data points. The current state of the field may be reviewed in one of the comprehensive monographs written by Bosq \cite{bosq}, Ramsay and Silverman \cite{ramsay1}, Horv\'ath and Kokoszka \cite{horvath}, and Hsing and Eubank \cite{hsing}. Many of the applications discussed there point to an intrinsic time series nature of the underlying curves. This has led to an upsurge in contributions to the functional time series literature. The many recent works in this area include papers on time-domain methods such as H\"ormann and Kokoszka \cite{weaklydep}, who introduced a framework to describe weakly stationary functional time series, and Aue et al.\ \cite{aue} and Klepsch and Kl\"uppelberg \cite{kk}, who developed functional prediction methodology; as well as frequency domain methods such as Panaretos and Tavakoli \cite{panaretos}, who utilized functional cumulants to justify their functional Fourier analysis, H\"ormann et al.\ \cite{hoermann}, who defined the concept of dynamic functional principal components, and Aue and van Delft \cite{avd}, who designed stationarity tests based on functional periodogram properties. 

This paper is concerned with functional moving average (FMA) processes as a building block to estimate potentially more complicated functional time series. Together with the functional autoregressive (FAR) processes, the FMA processes comprise one of the basic functional time series model classes. They are used, for example, as a building block in the $L^p$-$m$-approximability concept of H\"ormann and Kokoszka \cite{weaklydep}, which is based on the idea that a sufficiently close approximation with truncated linear processes may adequately capture more complex dynamics, based on a causal infinite MA representation. It should be noted that, while there is a significant number of papers on the use of both FMA and FAR processes, the same is not the case for the more flexible functional autoregressive moving average (FARMA) processes. This is due to the technical difficulties that arise from transitioning from the multivariate to the functional level. One advantage that FMA processes enjoy over other members of the FARMA class is that their projections remain multivariate MA processes (of potentially lower order). This is one of the reasons that makes them attractive for further study.

Here interest is in estimating the dynamics of an invertible functional linear process through fitting FMA models. The operators in the FMA representation, a functional linear filter, are estimated using a functional Innovations Algorithm. This counterpart of the well-known univariate and multivariate Innovations Algorithms was recently introduced by Klepsch and Kl\"uppelberg \cite{kk}, where its properties were analyzed on a population level. These results are extended to the sample case and used as a first step in the estimation. The proposed procedure uses projections to a number of principal directions, estimated through functional principal components analysis (see, for example, Ramsay and Silverman \cite{ramsay1}). To ensure appropriate large-sample properties of the proposed estimators, the dimensionality of the principle directions space is allowed to grow slowly with the sample size. In this framework, the consistency of the estimators of the functional linear filter is the main theoretical contribution. It is presented in Section \ref{sec:methodology}.

The theoretical results are accompanied by selection procedures to guide the selection of the order of the approximating FMA process and the dimension of the subspace of principal directions. To choose the dimension of the subspace a sequential test procedure is proposed. Order selection based on AICC, Box--Ljung  and FPE type criteria are suggested. Details of the proposed model selection procedures are given in Section \ref{sec:selection}. Their practical performance is highlighted in Section~\ref{sec:sim}, where results of a simulation study are reported, and Section~\ref{sec:app}, where an application to real-world data on vehicle traffic data is discussed. 

To summarize, this paper is organized as follows. Section \ref{sec:setting} briefly reviews basic notions of Hilbert-space valued random variables before introducing the setting and the main assumptions. The proposed estimation methodology for functional time series is detailed in Section \ref{sec:methodology}. Section \ref{sec:selection} discusses in some depth the practical selection of the dimension of the projection space and the order of the approximating FMA process. These suggestions are tested in a Monte Carlo simulation study and an application to traffic data in Sections \ref{sec:sim} and \ref{sec:app}, respectively. Section \ref{sec:conclusion} concludes and proofs of the main results can be found in Section \ref{sec:proof}.

%%%%%%%%%%%%%%%%%%%%%%%%
\section{Setting}
\label{sec:setting}
%%%%%%%%%%%%%%%%%%%%%%%%

Functional data is often conducted in $H=L^2[0,1]$, the Hilbert-space of square-integrable functions, with canonical norm $\|x\|=\langle x,x\rangle^{1/2}$ induced by the inner product $\left\langle x , y \right\rangle=\int_0^1 x(s)y(s)ds$ for $x,y\in H$. For an introduction to Hilbert spaces from a functional analytic perspective, the reader is referred to Chapters~3.2 and 3.6 in Simon~\cite{simon}. All random functions considered in this paper are defined on a probability space $(\Omega,\mathcal{A},\mathcal{P})$ and are assumed to be $\cala$-$\calb_H$-measurable, where $\calb_H$ denotes the Borel $\sigma$-algebra of subsets of $H$. 
Note that the space of square integrable random functions  $L^2_{H}=L^2(\Omega, \mathcal{A},\mathcal{P})$ is a Hilbert space with inner product $\E[\left\langle X,Y\right\rangle]=\E[\int_0^1X(s)Y(s)ds]$ for $X,Y \in L^2_{H}$. Similary, denote by $L^p_H=L^p(\Omega,\cala,\mathcal{P})$ the space of $H$-valued functions such that $\nu_p(X)=(\E[\|X\|^p])^{1/p}<\infty$. Let $\mathbb{Z}$, $\N$ and $\N_0$ denote the set of integers, positive integers and non-negative integers, respectively.

Interest in this paper is in fitting techniques for functional time series $(X_j\colon j\in\mathbb{Z})$ taking values in $L_H^2$. To describe a wide variety of temporal dynamics, the framework is established for %functional linear processes . The notion of $L^p$-$m$-approximability is adopted. A version of this notion was used for multivariate time series in Aue et al.\ \cite{AHHR} and then translated to the functional domain by H\"ormann and Kokoszka \cite{weaklydep}. The definition is as follows. Let $\mathbb{Z}$ denote the set of integers.
%
%\begin{definition}
%\label{def:lpm}
%{\rm 
%Let $p\geq 1$. A sequence $(X_j\colon j\in\mathbb{Z})$ with values in $L^p_H$ is called {\it $L^p$-$m$-approximable}\/ if
%\[
%X_j=f(\varepsilon_j,\varepsilon_{j-1},\ldots),
%\qquad j\in\mathbb{Z},
%\]
%can be represented as a functional Bernoulli shift with a sequence of independent, identically distributed random elements $(\varepsilon_j\colon j\in\mathbb{Z})$ taking values in the measurable space $S$ and a measurable function $f\colon S^\infty\to H$ such that 
%\[
%\sum_{m=0}^\infty\big(\E[\|X_j-X_{j,m}\|^p]\big)^{1/p},
%\] 
%where $X_{j,m}=f(\varepsilon_j,\ldots,\varepsilon_{j-m+1},\varepsilon_{j,m,j-m}^*,\varepsilon_{j,m,j-m-1}^*,\ldots)$ with $\varepsilon_{j,m,i}^*$ being independent copies of $\varepsilon_{j,0}$ independent of $(\varepsilon_j\colon j\in\mathbb{Z})$.
%}
%\end{definition}
%
%Conditions can be established for most of the common linear and nonlinear functional time series models to be $L^p$-$m$-approximable. In particular, 
functional linear processes $(X_j\colon j\in\mathbb{Z})$ defined through the series expansion
\begin{equation}
\label{eq:flp}
X_j=\sum_{\ell=0}^\infty \psi_\ell\varepsilon_{j-\ell},
\qquad j\in\mathbb{Z},
\end{equation}
%are naturally included if the condition $\sum_{m=1}^\infty\sum_{\ell=m}^\infty\|\psi_\ell\|_{\call}<\infty$ is met (see Proposition 2.1 in H\"ormann and Kokoszka \cite{weaklydep}). In \eqref{eq:flp}, 
where $(\psi_\ell\colon\ell\in\mathbb{N}_0)$ is a sequence in $\call$, the space of bounded linear operators acting on $H$, equipped with the standard norm $\|A\|_{\call}=\sup_{ \|x\|\leq 1} \|Ax\|$, and $(\varepsilon_j\colon j\in\mathbb{Z})$ is assumed to be an independent and identically distributed sequence in $L^2_H$. Additional summability conditions are imposed on the sequence of coefficient operators $(\psi_\ell\colon\ell\in\mathbb{N}_0)$ if it is necessary to control the rate of decay of the temporal dependence. %The latter is in contrast to standard time series literature (see, for example, Brockwell and Davis~\cite{brockwell}) that requires the innovations to be a white noise sequence. This is because the $L^p$-$m$-approximability concept utilizes the notion of strict rather than weak stationarity. 
Whenever the terminology ``functional linear process'' is used in this paper it is understood to be in the sense of \eqref{eq:flp}. Note that, as for univariate and multivariate time series models, every stationary causal functional autoregressive moving average (FARMA) process is a functional linear process (see Spangenberg \cite{spangenberg}, Theorem~2.3). Special cases include functional autoregressive processes of order $p$, FAR$(p)$, which have been thoroughly investigated in the literature, and the {\em functional moving average process of order $q$}, FMA$(q)$, which is given by the equation
\begin{align}
	X_j = \sum_{\ell=1}^q \theta_\ell \varepsilon_{j-\ell}+\varepsilon_j, \qquad j \in \mathbb{Z},  \label{FMA}
\end{align}
with $\theta_1,\ldots,\theta_q \in \call$. 

While the functional linear process in \eqref{eq:flp} is the prototypical causal time series, in the context of prediction, the concept of invertibility naturally enters; see Chapter 5.5 of Brockwell and Davis~\cite{brockwell}, and Nsiri and Roy~\cite{nsiri}. For a functional time series $(X_j\colon j\in\mathbb{Z})$ to be {\em invertible}, it is required that
\begin{equation}\label{eq:invertible}
X_j=\sum_{\ell=1}^\infty\pi_\ell X_{j-\ell}+\varepsilon_j,
\qquad j\in\mathbb{Z},
\end{equation}
for $(\pi_\ell\colon\ell\in\mathbb{N})$ in $\call$ such that $\sum_{\ell=1}^\infty\|\pi_\ell\|_{\call}<\infty$; see Merlev\`ede \cite{merlevede}. A sufficient condition for invertibility of a functional linear process, which is assumed throughout, is given in Theorem~7.2 of Bosq \cite{bosq}.

The definition of a functional linear process in \eqref{eq:flp} provides a convenient framework for the formulation of large-sample results and their verification. In order to analyze time series characteristics in practice, however, most statistical methods require a more in-depth understanding of the underlying dependence structure. This is typically achieved through the use of autocovariances which determine the second-order structure. Observe first that any random variable in $L^p_H$ with $p\geq 1$ possesses a unique {\em mean function} in $H$, which allows for a pointwise definition; see Bosq \cite{bosq}. For what follows, it is assumed without loss of generality that $\mu=0$, the zero function. If $X\in L_H^p$ with $p\geq 2$ such that $\E[X]=0$, then the {\em covariance operator} of $X$ exists and is given by 
\begin{align*}
	C_X(y) = \E [\langle X, y \rangle X], \qquad y \in H.
\end{align*}
If $X,Y \in L^p_{H}$ with $p\geq 2$ such that $\E [X] = \E [Y]=0$, then the {\em cross covariance operator} of $X$ and $Y$ exists and is given by 
\begin{align*}
	C_{X,Y}(y) = C_{Y,X}^*(y)=\E [\langle X, y \rangle Y],  \qquad y \in H.
\end{align*}
where $C_{Y,X}^*$ denotes the adjoint of $C_{Y,X}$, noting that the adjoint $A^*$ of an operator $A$ is defined by the equality $\langle Ax,y \rangle = \langle x , A^* y\rangle$ for $x,y\in H$. The operators $C_X$ and $C_{Y,X}$ belong to $\caln$, the class of {\em nuclear operators}, whose elements $A$ have a representation $A=\sum_{j=1}^{\infty} \lambda_j\langle e_j , \cdot\rangle f_j $ with $\sum_{j=1}^{\infty} \vert \lambda_j \vert < \infty$ for two orthonormal bases (ONB) $(e_j)_{j\in\N}$ and $(f_j)_{j\in\N}$ of $H$. In that case $\Vert A\Vert_{\caln}=\sum_{j=1}^\infty \vert \lambda_j\vert <\infty$ %}$A$ are required to be bounded linear such that the nuclear norm $\| A\|_{\caln}=\sum_{i=1}^{\infty} \langle A e_i,e_i\rangle$ if finite for some (and hence all) orthonormal basis (ONB) $(e_i\colon i\in\N)$ of $H$
; see Section~1.5 of Bosq \cite{bosq}. Furthermore, $C_X$ is self-adjoint ($C_X=C_X^*$) and non-negative definite with spectral representation
\begin{align*}
	C_X(y)=\sum_{i=1}^{\infty} \lambda_i \langle y, \nu_i \rangle \nu_i, \qquad y\in H, 
\end{align*}
where $(\nu_i\colon i\in\N)$ is an ONB of $H$ and $(\lambda_i\colon i\in\N)$ is a sequence of positive real numbers such that $\sum_{i=1}^{\infty} \lambda_i < \infty$. When considering spectral representations, it is standard to assume that the $(\lambda_i\colon i\in\N)$ are ordered decreasingly and that there are no ties between consecutive $\lambda_i$. 

For ease of notation, introduce the operator 
%\begin{align*}
$x\otimes y(\cdot)= \langle x ,\cdot \rangle  y$ for $x,y\in H$. 
%\end{align*}
Then, $C_X = \E [  X\otimes X  ]$ and $C_{X,Y} = \E[X\otimes Y]$. Moreover, for a stationary process $(X_j\colon j\in\Z)$, the {\em lag-$h$ covariance operator} can be written as
\begin{align}
C_{X;h}=\E[X_0\otimes X_h],\qquad h\in \Z. \label{cxh}
\end{align}
The quantities in \eqref{cxh} are the basic building block in the functional Innovations Algorithm and the associated estimation strategy to be discussed in the next section.

%%%%%%%%%%%%%%%%%%%%%%%%
\section{Estimation methodology}
\label{sec:methodology}
%%%%%%%%%%%%%%%%%%%%%%%%

%This section is devoted to deriving consistent estimators for the parameters of invertible FMA processes. The proposed estimators are based on linear prediction equations, the idea being the same as in the multivariate case in \cite{lewis} and  \cite{mitchell}. We therefore start by deriving functional prediction equations. As general solutions to these equations do not exist on $H$, we propose to project our data on an appropriate finite dimensional space, where solutions to the prediction equations can be found. The parameter operators in the solution to the prediction equations provide initial estimates to the functional model parameters.  We investigate assumptions such that letting the dimension of the finite dimensional space tend to infinity yields consistent estimators. As it is done in \cite{bosq} for the estimation of FAR processes, we differentiate between the case where the finite dimensional space is known (Section~\ref{known}), and where it needs to be estimated (Section~\ref{unknown}).
%We start by recalling some linear prediction theory which will motivate our estimators.

\subsection{Linear prediction in function spaces}

Briefly recall the concept of linear prediction in Hilbert spaces as defined in Section~1.6 of Bosq \cite{bosq}. Let $(X_j\colon j\in \Z)$ be an invertible, functional linear process. Let $\bar{L}_{n,k}$ be the $\call$-closed subspace (LCS) generated by the stretch of functions $X_{n-k},\ldots,X_n$. LCS here is to be understood in the sense of Fortet \cite{fortet} that is $\bar{L}_{n,k}$ is the smallest subspace of $H$ containing $X_{n-k},\ldots,X_n$, closed with respect to operators in $\call$. Then, the best linear predictor of $X_{n+1}$ given $\{X_{n},X_{n-1},\dots,X_{n-k}\}$ at the population level is given by 
%by the projection of $X_{n+1}$ on some rich enough subspace containing $\{X_{n},X_{n-1},\dots,X_{n-k}\}$. 
%To characterize the subspace on which to project, the concept of $\call$-closed subspaces (LCS), introduced in \cite{fortet} is used. The LCS generated by $\{X_{n},X_{n-1},\dots,X_{n-k}\}$, that we denote by $\LCS(\{X_{n},X_{n-1},\dots,X_{n-k}\})$, is the smallest subspace containing all $X_i$, $i=n,\dots,n-k$, closed with respect to operators in $\call$. Hence, at the population level,  the best linear predictor of $X_{n+1}$ given $\{X_{n},X_{n-1},\dots,X_{n-k}\}$ is given by
\begin{align} \label{blp}
	\tilde X_{n+1,k}^f = P_{\bar{L}_{n,k}}(X_{n+1}),
	%P_{\LCS(\{X_{n},X_{n-1},\dots,X_{n-k}\})}(X_{n+1}),
\end{align}
where the superscript $f$ in the predictor notation indicates the fully functional nature of the predictor and $P_{\bar{L}_{n,k}}$ denotes projection on $\bar{L}_{n,k}$. Note that there are major differences to the multivariate prediction case. Due to the infinite dimensionality of function spaces, $\tilde X_{n+1,k}^f$ in \eqref{blp} is not guaranteed to have a representation in terms of its past values and operators in $\call$, see for instance Proposition~2.2 in Bosq \cite{bosq2014} and the discussion in Section~3 of Klepsch and Kl\"uppelberg \cite{kk}. A typical remedy in FDA is to resort to projections into principal directions and then to let the dimension $d$ of the projection subspace grow to infinity. At the subspace-level, multivariate methods may be applied to compute the predictors; for example the multivariate Innovations Algorithm; see Lewis and Reinsel \cite{lewis} and Mitchell and Brockwell \cite{mitchell}. This, however, has to be done with care, especially if sample versions of the predictors in \eqref{blp} are considered. Even at the population level, the rate at which $d$ tends to infinity has to be calibrated scrupulously to ensure that the inversions of matrices occurring, for example, in the multivariate Innovations Algorithm are meaningful and well defined (see Theorem 5.3 of Klepsch and Kl\"uppelberg \cite{kk}).

Therefore, the following alternative to the functional best linear predictor defined in \eqref{blp} is proposed. Recall that $(\nu_j\colon j\in\mathbb{N})$ are the eigenfunctions of the covariance operator $C_X$. Let $\mathcal{V}_d=\spa\{\nu_1,\dots,\nu_d\}$ be the subspace generated by the first $d$ principal directions and let $P_{\mathcal{V}_d}$ be the projection operator projecting from $H$ onto $\mathcal{V}_d$. Let furthermore $(d_i\colon i\in\N)$ be an increasing sequence of positive integers and define 
\begin{align}
 X_{d_i,j}=P_{\mathcal{V}_{d_i}}X_j,\qquad j\in\Z,\; i\in\N. \label{xdi}	
\end{align}
Note that \eqref{xdi} allows for the added flexibility of projecting different $X_j$ into different subspaces $\mathcal{V}_i$. Then, $X_{n+1}$ can be projected into the LCS generated by $X_{d_k,n},X_{d_{k-1},n-1},\ldots,X_{d_1,n-k}$, which is denoted by $\bar{{F}}_{n,k}$. Consequently, write
\begin{align}\label{blpd}
\tilde X_{n+1,k} = P_{\bar{{F}}_{n,k}}(X_{n+1})
%=\beta_{k,1}X_{d_k,n} + \beta_{k,2}X_{d_{k-1},n-1}+\dots+\beta_{k,k}X_{d_1,n-k+1}.
\end{align}
for the best linear predictor of $X_{n+1}$ given $\bar{{F}}_{n,k}$. This predictor could be computed by regressing $X_{n+1}$ onto $X_{d_k,n},X_{d_{k-1},n-1},\ldots,X_{d_1,n-k}$, but interest is here in the equivalent representation of $\tilde{X}_{n+1,k}$ in terms of one-step ahead prediction residuals given by
\begin{align}\label{innov}
	\tilde{X}_{n+1,k}
	= \sum_{i=1}^k\theta_{k,i}(X_{d_{k+1-i},n+1-i}-\tilde{X}_{n+1-i,k-i}),
	%= \theta_{k,1} (X_{d_k,n}- \tilde{X}_n^{k-1}) + \theta_{k,2}(X_{d_{k-1},n-1}- \tilde X^{k-2}_{n-1}) + \dots + \theta_{k,k}(X_{d_1,n-k}-\tilde{X}^0_{n-k}),
\end{align} 
where $\tilde{X}_{n-k,0}=0$. On a population level, it was shown in Klepsch and Kl\"uppelberg \cite{kk} that the coefficients $\theta_{k,i}$ with $k,i\in\N$ can be computed with the following algorithm.

\begin{algo}[\bf{Functional Innovations Algorithm}]
\label{fia}
Let $(X_j\colon j\in\Z)$ be a stationary functional linear process with covariance operator $C_X$ possessing eigenpairs $(\lambda_i,\nu_i\colon i\in\N)$ with $\lambda_i>0$ for all $i\in\N$. The best linear predictor $\tilde{ X}_{n+1,k}$ of ${ X}_{n+1}$  based on $\bar{{F}}_{n,k}$ defined in \eqref{innov} can be computed by the recursions
	\begin{align}
	\tilde{X}_{n-k,0}&=0\qquad\mbox{and}\qquad V_{1}=P_{\mathcal{V}_{d_1}}C_{X}P_{\mathcal{V}_{d_1}},\notag\\
	\tilde{X}_{n+1,k}&= \sum_{i=1}^{k} \theta_{k,i} (X_{d_{k+1-i},n+1-i}-\tilde{X}_{n+1-i,{k-i}}), \notag \\ %\label{xdhat1},\\
	\theta_{k,k-i}&=\bigg(P_{\mathcal{V}_{d_{k+1}}}\,C_{X;k-i}\,P_{\mathcal{V}_{d_{i+1}}} - \sum_{j=0}^{i-1} \theta_{k,k-j}  \ V_{j} \ \theta_{i,i-j}^*\bigg)V_{i}^{-1}, \qquad i=1,\dots,n-1, \label{theta1} \\
	V_{k}&=C_{X_{d_{k+1}}-\tilde{X}_{n+1,k}}= C_{X_{d_{k+1}}} - \sum_{i=0}^{k-1} \theta_{k,k-i}V_{i}\theta^*_{k,k-i}. \label{vd1}
	\end{align}
	Note that $\theta_{k,k-i}$ and $V_i$ are operators in $\call$ for all $i=1,\dots,k$.
\end{algo}

The first main goal is now to show how a finite sample version of this algorithm can be used to estimate the operators in \eqref{FMA}, as these FMA processes will be used to approximate the more complex processes appearing in Definition \ref{def:lpm}. Note that H\"ormann and Kokoszka \cite{weaklydep} give assumptions under which $\sqrt{n}$-consistent estimators can be obtained for the lag-$h$ autocovariance operator $C_{X;h}$, for $h\in\Z$. However, in \eqref{theta1}, estimators are required for the more complicated quantities $P_{\mathcal{V}_{d_{k+1}}}\,C_{X;k-i}\,P_{\mathcal{V}_{d_{i+1}}}$, for $k,\, i\in\N$. If, for $i\in\N$, the projection subspace $\mathcal{V}_{d_i}$ is known, consistent estimators of $P_{\mathcal{V}_{d_{k+1}}}\,C_{X;k-i}\,P_{\mathcal{V}_{d_{i+1}}}$ can be obtained by estimating $C_{X;k-i}$ and projecting the operator on the desired subspace. This case will be dealt with in Section~\ref{known}. In practice, however, the subspaces $\mathcal{V}_{d_i}$, $i\in\N$, need to be estimated. This is a further difficulty that will be addressed separately in an additional step as part of Section~\ref{unknown}.

Now, introduce additional notation. For $k\in\N$, denote by $(X_j(k)\colon j\in\Z)$ the functional process taking values in $H^k$ such that
\[
X_j(k)=(X_j,X_{j-1},\dots,X_{j-k+1})^\top,
\]
where $^\top$ signifies transposition. Let
\begin{align*}
\Gamma_k= C_{X(k)} \qquad \text{and} \qquad \Gamma_{1,k}=C_{X_{n+1},X_n(k)}=\E \big[X_{n+1} \otimes X_n(k)\big].
\end{align*}
Based on a realization $X_1,\dots,X_n$ of $(X_j\colon j\in\Z)$, estimators of the above operators are given by
\begin{align}
\hat\Gamma_k = \frac{1}{N-k}\sum_{j=k}^{N-1} X_j(k)\otimes X_j(k) 
\qquad\mbox{and}\qquad 
\hat\Gamma_{1,k} = \frac{1}{N-k} \sum_{j=k}^{N-1} X_{j+1}\otimes X_j(k).
\label{gammak}
\end{align}
%Note that if $(X_n)$ is a $L^4-m$-approximable functional linear process, so is $(X_n(k))_{n\in\Z}$. This is made more precise
The following theorem establishes the $\sqrt{n}$-consistency of the estimator $\hat\Gamma_k$ of $\Gamma_k$ defined in \eqref{gammak}.

\begin{theorem} \label{l4mapp}
If $(X_j\colon j\in\mathbb{Z})$ is a functional linear process defined in \eqref{eq:flp} such that the coefficient operators $(\psi_\ell\colon\ell\in\mathbb{N}_0)$ satisfy the summability condition $\sum_{m=1}^\infty\sum_{\ell=m}^\infty\|\psi_\ell\|_{\call}<\infty$ and with independent, identically distributed innovations $(\varepsilon_j\colon j\in\mathbb{Z})$ such that $\E[\|\varepsilon_0\|^4]<\infty$, then
%Let $(X_n)$ be a $L^4-m$-approximable functional linear process. Then $(X_n(k))_{n\in\Z}$ and $(P_{(d)}X_n(k))_{n\in\Z}$ are also  $L^4-m$-approximable functional linear processes. Furthermore
	\begin{align*}
	%(N-k)\,E\big[\Vert \hat C_{X(k)} - C_{X(k)}\Vert_\caln^2\big] =
	(N-k)\, \E \big[\Vert \hat \Gamma_k - \Gamma_k \Vert_\caln^2\big] 
	\leq k \, U_X,
	\end{align*}
	where $U_X$ 
	%= \E[\Vert X \Vert^4]+ 4\sqrt{2}(\E[\Vert X \Vert^4])^{3/4} \, \sum_{m=1}^{\infty} (\E[\Vert X_m- X_m^{(m)} \Vert^4])^{1/4}$ 
	is a constant that does not depend on $n$. %\textcolor{blue}{Result for $\hat\Gamma_{1,k}$?}
\end{theorem}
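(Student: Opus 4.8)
The plan is to express the Hilbert–Schmidt/nuclear-norm error $\|\hat\Gamma_k-\Gamma_k\|_{\caln}$ componentwise across the $k\times k$ block structure of $\Gamma_k=C_{X(k)}$ and reduce everything to the estimation error of a single block, namely the sample lag-$h$ covariance operator $\frac{1}{N-k}\sum_j X_{j-a}\otimes X_{j-b}$ against $C_{X;h}$ with $h=a-b$. Since $\hat\Gamma_k-\Gamma_k$ is a $k\times k$ array of operators on $H$, its nuclear norm is controlled (up to a factor that is at worst $k$ or $k^2$, absorbed into $k\,U_X$ or requiring a slightly sharper count) by the sum of the nuclear norms of the blocks; by stationarity each block's expected squared error is the same type of quantity, so it suffices to bound $(N-k)\,\E\big[\|\widehat{C_{X;h}}-C_{X;h}\|_{\caln}^2\big]$ uniformly in $h$ and in which shifted empirical average is used, and then sum the $k^2$ (or $O(k)$, after grouping equal-lag blocks) contributions.

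For the single-block bound, the first step is to center: write $\widehat{C_{X;h}}-C_{X;h}=\frac{1}{N-k}\sum_{j}\big(X_{j-a}\otimes X_{j-b}-C_{X;h}\big)$ and expand the expected squared nuclear norm as a double sum over $j,j'$ of $\E\big[\langle X_{j-a}\otimes X_{j-b}-C_{X;h},\,X_{j'-a}\otimes X_{j'-b}-C_{X;h}\rangle_{\caln}\big]$ — or more conveniently work with the Hilbert–Schmidt inner product, since for finite-rank-in-expectation fourth-moment arguments the HS norm is the natural object and $\|\cdot\|_{\caln}$ can be traded against $\|\cdot\|_{HS}$ at the cost of another absorbed constant (here one should be slightly careful: in general $\|\cdot\|_{\caln}$ is \emph{not} bounded by $\|\cdot\|_{HS}$, so the cleaner route is to bound $\E[\|\cdot\|_{\caln}]$ directly via the triangle inequality over the summands combined with Cauchy–Schwarz, or to invoke that for rank-one operators $\|x\otimes y\|_{\caln}=\|x\otimes y\|_{HS}=\|x\|\|y\|$). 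Using the linear-process representation $X_j=\sum_{\ell\ge0}\psi_\ell\varepsilon_{j-\ell}$, each cross term reduces to a finite sum of fourth-order moments of the i.i.d.\ innovations weighted by products $\|\psi_{\ell_1}\|_{\call}\|\psi_{\ell_2}\|_{\call}\|\psi_{\ell_3}\|_{\call}\|\psi_{\ell_4}\|_{\call}$; the i.i.d.\ structure forces the innovation indices to match in pairs, which makes the covariance $\E[\cdots]$ between the $j$-th and $j'$-th summands summable in $j-j'$ with total mass controlled by $\big(\sum_\ell\|\psi_\ell\|_{\call}\big)^2\big(\sum_m\sum_{\ell\ge m}\|\psi_\ell\|_{\call}\big)^2$ or a similar convergent expression built from the hypothesized summability condition, times $\E[\|\varepsilon_0\|^4]<\infty$. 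Dividing by $(N-k)^2$ and summing the $O(N-k)$ nonzero diagonal-band terms yields the $1/(N-k)$ rate with a constant depending only on $\sum_m\sum_{\ell\ge m}\|\psi_\ell\|_{\call}$ and $\E[\|\varepsilon_0\|^4]$ — i.e.\ a constant $U_X$ independent of $n$.

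The main obstacle is the bookkeeping in the fourth-moment expansion: controlling $\E\big[\|\sum_j(X_{j-a}\otimes X_{j-b}-C_{X;h})\|_{\caln}^2\big]$ requires pairing up four innovation-index sums and tracking which pairings survive the i.i.d.\ cancellation, and then checking that the surviving weighted sums of $\|\psi\|_{\call}$'s are finite precisely under $\sum_{m\ge1}\sum_{\ell\ge m}\|\psi_\ell\|_{\call}<\infty$ (rather than the weaker $\sum_\ell\|\psi_\ell\|_{\call}<\infty$) — this is where the stronger summability hypothesis is used, to get a genuinely $n$-free bound on the dependence-induced cross-covariances. A secondary technical point is handling the nuclear norm rather than the more forgiving Hilbert–Schmidt norm; I would deal with this by bounding $\nu_1$ of each rank-one summand by $\|X_{j-a}\|\,\|X_{j-b}\|$ and of $C_{X;h}$ by $\nu_2(X_0)^2$, then applying Cauchy–Schwarz in $L^2_H$, so that only \emph{second} moments of $X$ (hence fourth moments of $\varepsilon$, via the linear representation) enter. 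Finally, re-assembling the $k$ (after grouping) block contributions gives the stated $k\,U_X$ and completes the argument.
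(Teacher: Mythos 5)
Your route is genuinely different from the paper's: the paper does not touch the block structure of $\Gamma_k$ at all. Instead it treats $X_j(k)=(X_j,\ldots,X_{j-k+1})^\top$ as a single $H^k$-valued process, verifies in display \eqref{xrk} that this stacked process inherits $L^4$-$m$-approximability from $(X_j)$ with the approximation sum growing only like $k^{1/4}$, and then cites Theorem~3.1 of H\"ormann and Kokoszka for the bound $n\,\E[\Vert \hat C_{X(k)}-C_{X(k)}\Vert^2]\le U_{X(k)}$, finally checking that the explicit constant $U_{X(k)}$ scales linearly in $k$. All of the fourth-moment pairing bookkeeping you describe is thereby outsourced to that cited theorem, applied once to the whole stacked process rather than block by block.

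Two points in your argument are genuine gaps rather than bookkeeping. First, the blockwise reduction does not deliver the stated rate: the nuclear norm of a $k\times k$ block operator is bounded by the \emph{sum} of the $k^2$ block nuclear norms, so after squaring and Cauchy--Schwarz you pick up a factor of order $k^4$ (grouping by lag does not obviously help, since the empirical blocks at equal lag are shifted sums and not identical). Because $k=k_n\to\infty$ in this paper, the excess powers of $k$ cannot be ``absorbed'' into a constant $U_X$ that is independent of $n$, and the weaker bound would break the downstream use of the theorem (the step $\Vert\hat\Gamma_{k,d}-\Gamma_{k,d}\Vert_{\call}=O_p(\sqrt{k/(n-k)})$ in the proof of Theorem~\ref{autoregressive}, which is calibrated against Assumption~\ref{assumptions}(ii)). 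Second, your single-block computation expands $\E\big[\Vert\sum_j Z_j\Vert_{\caln}^2\big]$ as a double sum of cross terms, but $\caln$ is not a Hilbert space, so no such bilinear expansion exists; the fallback you mention (triangle inequality over the $N-k$ summands) discards exactly the cancellation from centering and yields an $O(1)$ bound instead of $O(1/(N-k))$. The standard escape is to carry out the second-moment expansion in the Hilbert--Schmidt norm (where it is legitimate) together with the $m$-dependent coupling $X_j^{(m)}$ to obtain decay in $|j-j'|$ --- which is precisely the content of the H\"ormann--Kokoszka theorem the paper invokes. To make your approach work you would need to (a) perform the covariance estimation bound for the full stacked operator in one stroke rather than block by block, and (b) replace the nuclear-norm double-sum expansion by an HS-norm argument plus a separate justification of the nuclear-norm statement.
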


The proof of Theorem \ref{l4mapp} is given in Section \ref{sec:proof}. There, an explicit expression for the constant $U_X$ is derived that depends on moments of the underlying functional linear process and on the rate of decay of the temporal dependence implied by the summability condition on the coefficient operators $(\psi_\ell\colon\ell\in\mathbb{N}_0)$.

%The following result on the eigenvalues of $\Gamma_{k,d}$ will be very useful in what follows.
%\begin{lemma}[Theorem~1.2 in \cite{mitchell2}]\label{alphadk}
%	Let $(X_n)_{n\in\Z}$ be an invertible functional linear process. Then the smallest eigenvalue of $\Gamma_{k,d}$ is bounded below by $2\pi\,\alpha_{d_k}$, where $\alpha_{d_k}$ is  the infimum of the eigenvalues of all spectral density operators of $(X_{d_k,n})_{n\in\Z}$.
%\end{lemma}

%\JK{Hier kam vorher der Sch\"atzer von $V_k$ : den brauchen wir aus meiner Sicht allerdings gar nicht mehr, weil die gesch\"atzten $V_k$ aus dem Inno Algo hervor gehen..!}
%To complete the sample version of Algorithm \ref{fia}, estimators for the quantities $V_k$ are needed. Since these are the sample versions of the covariance operator of the one-step ahead prediction residual functions, this can be done using the estimator
%\begin{align}
%	\hat V_k = \frac{1}{N-k} \sum_{j=k}^{N-1} (X_j-\hat X_{j,k})\otimes (X_j-\hat X_{j,k}),
%\label{eq:vkhat}
%\end{align}
%where 
%\begin{align}
%$\hat X_{j,k} = \sum_{i=1}^{k} \hat \beta_{k,i} X_{j+1-i}$
%\end{align}
%is an estimate of $\tilde{X}_{j,k}$. The estimators in \eqref{eq:vkhat} are consistent.

%%%%%%%%%%%%%%%%%%%%%%%%
\subsection{Known projection subspaces}
\label{known}
%%%%%%%%%%%%%%%%%%%%%%%%

In this section, conditions are established that ensure consistency of estimators of a functional linear process under the assumption that the projection subspaces $\mathcal{V}_{d_i}$ are known in advance. In this case as well as in the unknown subspace case, the following the general strategy is pursued; see Mitchell and Brockwell \cite{mitchell}. Start by providing consistency results for the estimators regression estimators of $\beta_{k,1},\dots,\beta_{k,k}$ in the linear model formulation
\[
\tilde X_{n+1,k}=\beta_{k,1}X_{d_k,n} + \beta_{k,2}X_{d_{k-1},n-1}+\dots+\beta_{k,k}X_{d_1,n-k+1}
\]
of \eqref{blpd}. To obtain the consistency of the estimators $\theta_{k,1},\ldots,\theta_{k,k}$ exploit then that  
regression operators and Innovations Algorithm coefficient operators are, for $k\in\N$, linked through the recursions
\begin{align}
\theta_{k,i}=\sum_{j=1}^i \beta_{k,j} \theta_{k-j,i-j}, \qquad   i=1,\dots,k.\label{link}
\end{align}

Define furthermore $P_{(k)}= \mathrm{diag}(P_{\mathcal{V}_{d_k}}, \dots,P_{\mathcal{V}_{d_1}})$, the operator from $H^k$ to $H^k$ whose $i$th diagonal entry is given by the projection operator onto $\mathcal{V}_{d_i}$. One verifies that
%\begin{align}
$P_{(k)}X_n(k)= (X_{d_k,n},X_{d_{k-1},n-1},\dots,X_{d_1,n-k})^\top$, %\label{xnk }\\
$C_{P_{(k)}X(k)}=P_{(k)}\Gamma_k P_{(k)}=\Gamma_{k,d}$ and %\quad \text{and} \label{gammakd} \\ 
$C_{X,P_{(k)}X(k)}= P_{(k)}\Gamma_{1,k}=\Gamma_{1,k,d}$. %\label{gamma1kd}
%\end{align}
With this notation, it can be shown that $B(k)=(\beta_{k,1},\dots,\beta_{k,k})$ satisfies the population Yule--Walker equations
\begin{align*}
B(k)=\Gamma_{1,k,d}\,\Gamma_{k,d}^{-1},
\end{align*}
of which sample versions are needed. In the known subspace case, estimators of $\Gamma_{1,k,d}$ and $\Gamma_{k,d}$ are %simply 
given by %$\wh\Gamma_{1,k}P_{(k)}$ and $P_{(k)} \wh\Gamma_k P_{(k)}$,  To simplify the notation, we denote 
\begin{align} 
\wh\Gamma_{k,d}=P_{(k)} \wh\Gamma_k P_{(k)} \qquad\mbox{and}\qquad\wh\Gamma_{1,k,d}=\wh\Gamma_{1,k}P_{(k)}, \label{hatgammakd}
\end{align} 
where $\wh\Gamma_{k}$ and $\wh\Gamma_{1,k}$ are as in \eqref{gammak}. With this notation, $B(k)$ is estimated by the sample Yule--Walker equations
 \begin{align} 
  \wh{B}(k) = \wh\Gamma_{1,k,d}\wh\Gamma_{k,d} ^{-1}. \label{yulewalkerhat}
   \end{align}
Furthermore, the operators $\theta_{k,i}$ in \eqref{innov} are estimated by $\wh\theta_{k,i}$, resulting from Algorithm~\ref{fia} applied to the estimated covariance operators with $\mathcal{V}_{d_i}$ known. In order to derive asymptotic properties of $\wh \beta_{k,i}$ and $\wh \theta_{k,i}$ as both $k$ and $n$ tend to infinity, the following assumptions are imposed. Let $\alpha_{d_k}$ denote the infimum of the eigenvalues of all spectral density operators of $(X_{d_k,j}\colon j\in\mathbb{Z})$. 

\begin{assumption}\label{assumptions}
As $n\rightarrow\infty$, let $k=k_n\rightarrow\infty$ and $d_k\rightarrow\infty$ such that \vspace{-.2cm}
\begin{enumerate}\itemsep-.2ex
\item[(i)] $(X_j\colon j\in\Z)$ is as in Theorem \ref{l4mapp} and invertible. %a $L^4-m$-approximable, invertible functional linear process satisfying $\E\Vert X_n - X_n^{(l-i)}\Vert ^2 \leq \E\Vert X_n - X_n^{(l)}\Vert^2$ for  $i\geq 0$ %{Maybe add conditions concerning $k^{1/4}$.} %\JK{Is this really needed here? The only thing that we really need is that the estimators of the finite dimensional covariances converge.. Check conditions. It is a priori enough to have a independent noise process? be careful, as the finte dimensional process has a priori only uncorrelated noise (because of Wold Representation). Furthermore i think that if we have a linear process it should already be L2m approsximable}
\item[(ii)] $k^{1/2}(n-k)^{-1/2}\alpha_{d_k}^{-2}\rightarrow 0$ as  $n\rightarrow \infty$.
		%\item $k^{1/2} \,\alpha_{d_k}^{-1} (N-k)^{-1/2}\rightarrow 0$, 
\item[(iii)] $k^{1/2}\alpha_{d_k}^{-1}  \, \big(\sum_{\ell>k} \Vert \pi_\ell\Vert_{\call} + \sum_{\ell=1}^k \Vert \pi_\ell \Vert_{\call} \sum_{i>d_{k+1-\ell}} \lambda_i \big) \rightarrow 0$ as $n\rightarrow\infty$. 
	\end{enumerate}
\end{assumption}

Invertibility imposed in part {\em (i)}\/ of Assumption \ref{assumptions} is a standard requirement in the context of prediction and is also necessary for the univariate Innovations Algorithm to be consistent. %$L^4-m$-approximability is the dependence concept used to show convergence of the estimators of mean and covariance of the data. $\E\Vert X_n - X_n^{(l-i)}\Vert ^2 \leq \E\Vert X_n - X_n^{(l)}\Vert^2$ for  $i\geq 0$ is needed to show the $L^4-m$ approximability of the $H^k$ valued process $X_n(k)$. It is however not restrictive and is satisfied by every commonly known linear process. 
Assumption~{\em (ii)}\/ describes the restrictions on the relationship between $k$, $d_k$ and $n$. The corresponding multivariate assumption in Mitchell and Brockwell \cite{mitchell} is  $k^3/n\rightarrow 0 $ as $n\rightarrow\infty$. Assumption~{\em (iii)}\/ is already required in the population version of the functional Innovations Algorithm in Klepsch and Kl\"uppelberg \cite{kk}. It ensures that the best linear predictor based on the last $k$ observations converges to the conditional expectation for $k\rightarrow\infty$. The corresponding multivariate condition in Brockwell and Mitchell \cite{mitchell} is $k^{1/2} \sum_{\ell>k}\Vert \pi_\ell\Vert \rightarrow 0$ as $n\rightarrow\infty$, where $(\pi_\ell\colon\ell\in\mathbb{N})$ here denote the matrices in the invertible representation of a multivariate linear process. 

The main result concerning the asymptotic behavior of the estimators $\wh\beta_{k,i}$ and $\wh\theta_{k,i}$ is given next.

\begin{theorem}\label{autoregressive}
Let $\mathcal{V}_{d_i}$ be known for all $i\in\N$ and let Assumption~\ref{assumptions} be satisfied. Then, for all $x\in H$ and all $i\in\N$ as $n\rightarrow\infty$, \vspace{-.2cm}
\begin{enumerate}\itemsep-.2ex
\item[(i)] $\Vert (\wh \beta_{k,i} - \pi_i )(x) \Vert \overset{p}{\rightarrow} 0$,
\item[(ii)]$\Vert( \wh \theta_{k,i}-\psi_i)(x)\Vert \overset{p}{\rightarrow}0.$
\end{enumerate}
If the operators $(\psi_\ell\colon\ell\in\mathbb{N})$ and $(\pi_\ell\colon\ell\in\mathbb{N})$ in the respective causal and invertible representations are assumed Hilbert--Schmidt, then the convergence in (i) and (ii) is uniform.
\end{theorem}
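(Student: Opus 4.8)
The plan is to establish (i) first and then deduce (ii) via the recursion \eqref{link}, working throughout on the subspace level where the multivariate Innovations Algorithm applies. The backbone is the decomposition
\[
\wh\beta_{k,i}-\pi_i
=(\wh\beta_{k,i}-\beta_{k,i})+(\beta_{k,i}-\pi_i),
\]
where the first (estimation) term is controlled by comparing the sample Yule--Walker solution \eqref{yulewalkerhat} with its population counterpart, and the second (approximation/bias) term is the population-level error already analyzed in Klepsch and Kl\"uppelberg~\cite{kk}. For the bias term one shows $\|(\beta_{k,i}-\pi_i)(x)\|\to0$ using invertibility together with Assumption~\ref{assumptions}(iii): the $\beta_{k,j}$ are the finite-past regression operators whose limit, as $k\to\infty$, is the invertible-representation operator $\pi_i$, with the two sums in (iii) quantifying respectively the truncation error from using only $k$ lags and the projection error from replacing $X_j$ by $X_{d_i,j}$. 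For the estimation term I would write
\[
\wh B(k)-B(k)
=\wh\Gamma_{1,k,d}\,\wh\Gamma_{k,d}^{-1}-\Gamma_{1,k,d}\,\Gamma_{k,d}^{-1},
\]
split it into $(\wh\Gamma_{1,k,d}-\Gamma_{1,k,d})\wh\Gamma_{k,d}^{-1}$ plus $\Gamma_{1,k,d}(\wh\Gamma_{k,d}^{-1}-\Gamma_{k,d}^{-1})$, and use $\wh\Gamma_{k,d}^{-1}-\Gamma_{k,d}^{-1}=-\wh\Gamma_{k,d}^{-1}(\wh\Gamma_{k,d}-\Gamma_{k,d})\Gamma_{k,d}^{-1}$.

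The key quantitative inputs are, on the one hand, Theorem~\ref{l4mapp}, which via \eqref{hatgammakd} and $\|P_{(k)}\|_{\call}\le1$ gives $(N-k)\,\E\|\wh\Gamma_{k,d}-\Gamma_{k,d}\|_\caln^2\le k\,U_X$ and an analogous bound for $\wh\Gamma_{1,k,d}-\Gamma_{1,k,d}$; and, on the other hand, a lower bound on the smallest eigenvalue of $\Gamma_{k,d}$. The latter is where $\alpha_{d_k}$ enters: since $\Gamma_{k,d}$ is (a projection of) the covariance of the stationary $H^k$-valued process, its eigenvalues are bounded below by $2\pi$ times the infimum over frequencies of the smallest eigenvalue of the spectral density operator of $(X_{d_k,j})$, i.e.\ by a constant multiple of $\alpha_{d_k}$; this is the functional analogue of the standard multivariate spectral bound and is exactly the role $\alpha_{d_k}$ plays in Mitchell and Brockwell~\cite{mitchell}. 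Consequently $\|\Gamma_{k,d}^{-1}\|_{\call}=O(\alpha_{d_k}^{-1})$ and, on the event that $\|\wh\Gamma_{k,d}-\Gamma_{k,d}\|_{\call}\le\tfrac12\alpha_{d_k}$ (which by Markov and Theorem~\ref{l4mapp} has probability $\to1$ precisely because of Assumption~\ref{assumptions}(ii)), also $\|\wh\Gamma_{k,d}^{-1}\|_{\call}=O(\alpha_{d_k}^{-1})$. Chaining these, $\E\|\wh B(k)-B(k)\|$ is $O\big(k^{1/2}(N-k)^{-1/2}\alpha_{d_k}^{-2}\big)$ up to constants depending only on $U_X$ and $\|\Gamma_{1,k}\|$, which tends to $0$ by Assumption~\ref{assumptions}(ii); a Markov-inequality argument then yields $\|(\wh\beta_{k,i}-\beta_{k,i})(x)\|\overset{p}{\to}0$ for each fixed $i$ and $x$. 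Combining with the bias estimate proves (i).

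For (ii) I would argue inductively on $i$ using $\wh\theta_{k,i}=\sum_{j=1}^i\wh\beta_{k,j}\wh\theta_{k-j,i-j}$ and $\psi_i=\sum_{j=1}^i\pi_j\psi_{i-j}$ (the causal-invertible identity): write $\wh\theta_{k,i}-\psi_i$ as a telescoping sum of products in which one factor is a converging $\wh\beta$-difference or $\wh\theta$-difference and the remaining factors are uniformly bounded operators, so that pointwise convergence propagates; the uniform boundedness of the $\wh\theta_{k-j,i-j}$ needed here follows from the same eigenvalue bounds via $V_i^{-1}$ in \eqref{theta1}. The final uniformity claim comes from upgrading the metric: when the $\psi_\ell$ and $\pi_\ell$ are Hilbert--Schmidt, one reruns the argument in the Hilbert--Schmidt norm, where the relevant covariance-estimation bounds are already in that norm (indeed $\caln\subset\calh\cals$ with $\|\cdot\|_{\calh\cals}\le\|\cdot\|_\caln$), so $\|\wh\beta_{k,i}-\pi_i\|_{\calh\cals}\overset{p}{\to}0$ and likewise for $\theta$, which gives convergence uniform in $\|x\|\le1$. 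The main obstacle I anticipate is the bookkeeping around $\alpha_{d_k}$: one must be careful that every inverse that appears (of $\Gamma_{k,d}$, of $\wh\Gamma_{k,d}$, and of the $V_i$) is controlled by the same $\alpha_{d_k}^{-1}$ rate and that the number of such factors multiplying together — which grows with $i$ but stays fixed once $i$ is fixed — does not spoil the rate, together with verifying that the event on which $\wh\Gamma_{k,d}$ is invertible with the right operator-norm bound has probability tending to one; this is exactly the point where Assumption~\ref{assumptions}(ii), the functional strengthening of $k^3/n\to0$, is indispensable.
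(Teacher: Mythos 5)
Your proposal is correct and rests on the same pillars as the paper's proof --- the Yule--Walker representation \eqref{yulewalkerhat}, the spectral lower bound $\Vert\Gamma_{k,d}^{-1}\Vert_\call\le\alpha_{d_k}^{-1}$ with the perturbation argument giving $\Vert\wh\Gamma_{k,d}^{-1}\Vert_\call=O_p(\alpha_{d_k}^{-1})$, Theorem~\ref{l4mapp} for covariance estimation, Assumption~\ref{assumptions}(iii) for the bias, and induction through \eqref{linkhat} for part (ii) --- but it organizes part (i) around a different decomposition. You split $\wh\beta_{k,i}-\pi_i$ into the estimation error $\wh B(k)-B(k)$ plus the population bias $B(k)-\Pi(k)$, delegating the latter to the population-level theory of Klepsch and Kl\"uppelberg. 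The paper instead compares $\wh B(k)$ to $\Pi(k)$ directly: writing $X_{j+1}=\Pi(k)X_j(k)+e_{j+1,k}$, it reduces the numerator to $\frac{1}{n-k}\sum_j P_{(k)}X_j(k)\otimes e_{j+1,k}$ and splits this into a term $U_{1n}$ driven by $e_{j+1,k}-\varepsilon_{j+1}$, whose expected norm is bounded by $\sqrt{k}$ times the two sums appearing in Assumption~\ref{assumptions}(iii), and a term $U_{2n}$ driven by $\varepsilon_{j+1}$, whose second moment is $O(k/(n-k))$ because $\varepsilon_{j+1}$ is independent of $X_j(k)$; a separate term $\Pi(k)(I_{H^k}-P_{(k)})(x)$ collects the residual projection error. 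The practical difference is where the stochastic fluctuation of the cross-covariance is absorbed: the paper gets it from the martingale-difference structure of $U_{2n}$, while your route needs the bound $(N-k)\,\E[\Vert\wh\Gamma_{1,k,d}-\Gamma_{1,k,d}\Vert_\caln^2]\le k\,U_X$, which is not literally stated in Theorem~\ref{l4mapp} but does follow since $\Gamma_{1,k}$ is a block of $\Gamma_{k+1}$, together with a bound on $\Vert\Gamma_{1,k,d}\Vert_\call$ uniform in $k$ (available from $\sum_h\Vert C_{X;h}\Vert_\caln<\infty$). Both deliver the rate $O_p\big(k^{1/2}(n-k)^{-1/2}\alpha_{d_k}^{-2}\big)$ demanded by Assumption~\ref{assumptions}(ii), and your part (ii) induction, including the base case $\wh\theta_{k,1}=\wh\beta_{k,1}$, $\psi_1=\pi_1$, coincides with the paper's.
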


The proof of Theorem \ref{autoregressive} is given in Section \ref{sec:proof}. The theorem establishes the pointwise convergence of the estimators needed in order to get a sample proxy for the functional linear filter $(\pi_\ell\colon\ell\in\N)$. This filter encodes the second-order dependence in the functional linear process and can therefore be used for estimating the underlying dynamics for the case of known projection subspaces.  %Convergence in the operator norm requires an additional assumption on the functional linear process. As is stated in e.g. Theorem~8.5 in \cite{bosq}, for uniform convergence we additionally need that $\pi_j$ and $\psi_j$ are Hilbert-Schmidt operators, for $j\in\N$.

%%%%%%%%%%%%%%%%%%%%%%%%
\subsection{Unknown projection subspaces}
\label{unknown}
%%%%%%%%%%%%%%%%%%%%%%%%

The goal of this section is to remove the assumption of known $\mathcal{V}_{d_i}$. Consequently, the standard estimators for the eigenfunctions $(\nu_i\colon i\in\N)$ of the covariance operator $C_X$ are used, obtained as the sample eigenfunctions $\wh\nu_j$ of $\wh C_X$. Therefore, for $i\in\N$, the estimators of $\mathcal{V}_{d_i}$ and $P_{\mathcal{V}_{d_i}}$ are
\begin{align}
\wh{\mathcal{V}}_{d_i}= \spa\{\wh \nu_1, \wh \nu_2,\dots,\wh \nu_{d_i}\} 
\qquad\mbox{and}\qquad 
\wh P_{\mathcal{V}_{d_i}} = P_{\wh{\mathcal{V}}_{d_i}}.
\end{align}
For $i\in\mathbb{N}$, let $\wh \nu_i'=c_i\wh \nu_i$, where $c_i = \text{sign}(\langle  \wh \nu_i,\nu_i\rangle)$. Then, Theorem 3.1 in H\"ormann and Kokoszka \cite{weaklydep} implies the consistency of $\wh\nu_i'$ for $\wh\nu_i$, with the quality of approximation depending on the spectral gaps of the eigenvalues $(\lambda_i\colon i\in\mathbb{N})$ of $C_X$.
%\begin{lemma}\label{lem:specgap}[\cite{weaklydep}, Theorem~3.1]
%	Let $(X_n)_{n\in\Z}$ be an $L^4-m$ approximable functional process. Then, for some constant $K<\infty$ for all $j\in\N$
%	\begin{align*}
%		N \E \Vert \wh \nu_j' - \nu_j \Vert ^2 \leq 8\,\delta_j^{-2} \, N \, \E \Vert \wh C_X - C_X \Vert_\caln ^2 \leq K \, \delta_j^{-2},
%	\end{align*}
%	where $\delta_1 = \lambda_1 - \lambda_2$ and $\delta_j = \min(\lambda_{j-1}-\lambda_j,\lambda_j-\lambda_{j+1})$, for $j\geq 2$.
%\end{lemma}
With this result in mind, define 
\begin{align}
\wh	{\wh\Gamma}_{k,d}=\wh P_{(k)} \wh\Gamma_k \wh P_{(k)} 
\qquad\mbox{and}\qquad
\wh{\wh\Gamma}_{1,k,d}=\wh\Gamma_{1,k}\wh P_{(k)}. 
\label{hathatgammakd}
\end{align}
Now, if the projection subspace $\mathcal{V}_{d_i}$ is not known, the operators appearing in \eqref{link} and can be estimated by solving the estimated Yule--Walker equations 
\begin{align} 
 \wh{\wh{B}}(k) = \wh{\wh\Gamma}_{1,k,d}\wh{\wh\Gamma}_{k,d} ^{-1}. \label{yulewalkerhathat}
 \end{align}
The coefficient operators in Algorithm \ref{fia} obtained from estimated covariance operators and estimated projection space $\wh P_{\mathcal{V}_{d_i}} $ are denoted by $\wh{\wh\theta}_{k,i}$. In order to derive results concerning their asymptotic behavior, an additional assumption concerning the decay of the spectral gaps of $C_X$ is needed. Let $\delta_1=\lambda_1-\lambda_2$ and $\delta_j=\min\{\lambda_{j-1}-\lambda_j,\lambda_j-\lambda_{j+1}\}$ for $j\geq 2$.

\begin{assumption} \label{ass2}
As $n\rightarrow\infty$, $k=k_n\rightarrow\infty$ and $d_k\rightarrow\infty$ such that
\begin{enumerate}
\item[(iv)] $k^{3/2}{\alpha_{d_k}^{-2} \, n^{-1}}(\sum_{\ell=1}^{d_k} \delta_\ell^{-2})^{1/2} \rightarrow 0$.
\end{enumerate}
\end{assumption}

This type of assumption dealing with the spectral gaps is typically encountered when dealing with the estimation of eigenelements of functional linear processes (see, for example, Bosq \cite{bosq}, Theorem~8.7). We are now ready to derive the asymptotic result of the estimators in the general case that $A_{d_i}$ is not known.

\begin{theorem}\label{autoregressive2}
Let Assumptions~\ref{assumptions} and \ref{ass2} be satisfied. Then, for all $x\in H$ and $i \in\N$ as $n\rightarrow\infty$, 
\vspace{-.2cm}
\begin{enumerate}\itemsep-.2ex
\item[(i)]  $\Vert (\wh{\wh \beta}_{k,i} - \pi_i )(x) \Vert \overset{p}{\rightarrow} 0,$
\item[(ii)] $	\Vert( \wh{\wh\theta}_{k,i}-\psi_i)(x)\Vert \overset{p}{\rightarrow}0.$
\end{enumerate}
If the operators $(\psi_\ell\colon\ell\in\mathbb{N})$ and $(\pi_\ell\colon\ell\in\mathbb{N})$ are Hilbert--Schmidt, then the convergence %in {\em (i)}\/ and {\em (ii)}\/
is uniform.
\end{theorem}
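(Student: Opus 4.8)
The plan is to reduce the unknown-subspace case to the known-subspace case treated in Theorem \ref{autoregressive} by controlling the additional error incurred by replacing the true projections $P_{\mathcal{V}_{d_i}}$ with their sample counterparts $\wh P_{\mathcal{V}_{d_i}}$. The key quantitative input is that the operator norm $\|\wh P_{\mathcal{V}_{d_i}} - P_{\mathcal{V}_{d_i}}\|_{\call}$ can be bounded, via the Hörmann--Kokoszka bound on $\|\wh\nu_j' - \nu_j\|$, in terms of $\|\wh C_X - C_X\|_{\call}$ and the inverse spectral gaps; summing over the first $d_i$ eigenfunctions produces the quantity $(\sum_{\ell=1}^{d_i}\delta_\ell^{-2})^{1/2}$ appearing in Assumption~\ref{ass2}(iv). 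Concretely, first I would establish that
\begin{align*}
\E\big[\|\wh P_{(k)} - P_{(k)}\|_{\call}^2\big] = O\big(n^{-1}\textstyle\sum_{\ell=1}^{d_k}\delta_\ell^{-2}\big),
\end{align*}
using that $\wh P_{(k)} - P_{(k)}$ is block-diagonal and that $\E[\|\wh C_X - C_X\|_{\call}^2] = O(n^{-1})$ follows from Theorem \ref{l4mapp} with $k=1$ (since the nuclear norm dominates the operator norm). This is the decisive estimate.

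Next I would compare $\wh{\wh\Gamma}_{k,d}$ with $\wh\Gamma_{k,d}$ and $\wh{\wh\Gamma}_{1,k,d}$ with $\wh\Gamma_{1,k,d}$. Writing $\wh P_{(k)}\wh\Gamma_k\wh P_{(k)} - P_{(k)}\wh\Gamma_k P_{(k)}$ and adding and subtracting $P_{(k)}\wh\Gamma_k\wh P_{(k)}$, each difference is bounded by $2\|\wh\Gamma_k\|_{\call}\|\wh P_{(k)}-P_{(k)}\|_{\call}$, and $\|\wh\Gamma_k\|_{\call}$ is $O_p(1)$ by Theorem \ref{l4mapp}. Combining this with the rate from the previous step and Assumption~\ref{ass2}(iv) (together with Assumption~\ref{assumptions}(ii), which controls $\|\wh\Gamma_{k,d}^{-1}\|_{\call}\le \alpha_{d_k}^{-1}$ asymptotically and the convergence of $\wh\Gamma_{k,d}$ to $\Gamma_{k,d}$) gives, through the resolvent identity $\wh{\wh\Gamma}_{k,d}^{-1} - \wh\Gamma_{k,d}^{-1} = -\wh{\wh\Gamma}_{k,d}^{-1}(\wh{\wh\Gamma}_{k,d}-\wh\Gamma_{k,d})\wh\Gamma_{k,d}^{-1}$, that
\begin{align*}
\|\wh{\wh B}(k) - \wh B(k)\|_{\call} \overset{p}{\rightarrow} 0,
\end{align*}
where $\wh B(k)$ is the known-subspace estimator. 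Part (i) then follows from the triangle inequality $\|\wh{\wh\beta}_{k,i} - \pi_i\|_{\call} \le \|\wh{\wh\beta}_{k,i} - \wh\beta_{k,i}\|_{\call} + \|\wh\beta_{k,i} - \pi_i\|_{\call}$ and Theorem \ref{autoregressive}(i). For part (ii), I would invoke the recursion \eqref{link} expressing $\theta_{k,i}$ in terms of the $\beta$'s, show inductively that the analogous perturbation $\|\wh{\wh\theta}_{k,i} - \wh\theta_{k,i}\|_{\call}\overset{p}{\rightarrow}0$, and conclude via Theorem \ref{autoregressive}(ii). The Hilbert--Schmidt refinement for uniform convergence is handled exactly as in Theorem \ref{autoregressive}, since all the perturbation bounds above were already stated in operator norm, which dominates; one just tracks that the relevant limiting operators are Hilbert--Schmidt so the pointwise limits upgrade to norm limits.

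The main obstacle is the first step: obtaining the right dependence of $\|\wh P_{(k)} - P_{(k)}\|_{\call}$ on both $k$ (the number of blocks) and $d_k$ (the dimension of each block), and verifying that the resulting rate is exactly matched by Assumption~\ref{ass2}(iv). The subtlety is that the sign-correction $c_i = \mathrm{sign}(\langle\wh\nu_i,\nu_i\rangle)$ is needed to make each $\wh\nu_i'$ close to $\nu_i$, but the projection $\wh P_{\mathcal{V}_{d_i}}$ itself is sign-invariant, so one must argue that $P_{\wh{\mathcal{V}}_{d_i}} - P_{\mathcal{V}_{d_i}} = \sum_{j\le d_i}(\wh\nu_j'\otimes\wh\nu_j' - \nu_j\otimes\nu_j)$ up to harmless cross terms, and then bound each summand by $2\|\wh\nu_j' - \nu_j\|$; the Cauchy--Schwarz step that turns $\sum_j\|\wh\nu_j'-\nu_j\|$ into $(\sum_j\delta_j^{-2})^{1/2}$ times $\|\wh C_X - C_X\|_{\call}$ must be done carefully to land on the stated rate. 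A secondary but routine complication is keeping the $k$-dependence explicit throughout, since $k\to\infty$ with $n$; here Theorem \ref{l4mapp}'s linear-in-$k$ bound and Assumption~\ref{assumptions}(ii)--(iii) are what keep everything under control.
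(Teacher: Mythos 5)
Your proposal follows essentially the same route as the paper: reduce to the known-subspace case via the triangle inequality and Theorem \ref{autoregressive}, bound $\Vert\wh{\wh\Gamma}_{k,d}-\wh\Gamma_{k,d}\Vert_{\call}$ through the difference of the empirical and true projections using the H\"ormann--Kokoszka spectral-gap bound on $\Vert\wh\nu_\ell-\nu_\ell\Vert$, invoke the resolvent identity for the inverses, and finish part (ii) by the recursion \eqref{link} and induction. The only imprecision is your claim that $\Vert\wh\Gamma_k\Vert_{\call}=O_p(1)$ ``by Theorem \ref{l4mapp}'': that theorem controls the nuclear norm, which grows linearly in $k$ (the paper's corresponding bound is $\frac{1}{n-k}\sum_j\Vert X_j(k)\Vert^2=O_p(k)$), so either supply a separate spectral-density argument for uniform boundedness of the operator norm or carry the extra factor of $k$, which Assumption~\ref{ass2}(iv) absorbs in any case.
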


The proof of Theorem \ref{autoregressive2} is given in Section \ref{sec:proof}. The theoretical results quantify the large-sample behavior of the estimates of the linear filter operators in the causal and invertible representations of the strictly stationary functional time series $(X_j\colon j\in\mathbb{Z})$. How to guide the application of the proposed method in finite samples is addressed in the next section.

%%%%%%%%%%%%%%%%%%%%%%%%
\section{Selection of principal directions and FMA order}
\label{sec:selection}
%%%%%%%%%%%%%%%%%%%%%%%%

Model selection is a difficult problem when working with functional time series. Contributions to the literature have been made in the context of functional autoregressive models by Kokoszka and Reimherr \cite{kokoreim}, who devised a sequential test to decide on the FAR order, %Fink \cite{fink}, who developed a procedure for order selection based on an AICC-type criterion, 
and Aue et al.\ \cite{aue}, who introduced an FPE-type criterion. %The performance of these criteria depends on the goal of the statistical analysis. 
% 
%The decision on which method to use should be left to the practitioner: for instance, when the goal is of a purely predictive nature, it is difficult to beat a FPE-type criterion as proposed in \cite{aue}. When the goal is estimation,  model selection, or minimizing the variance or dependence in the residuals, other criteria might have a better performance. 
%
%Even though for theoretical considerations, we assumed to project our data on spaces of dimension increasing with $k$ to achieve consistency, we will for practical considerations take $d$ fixed. Choosing $d$ is a crucial and non-trivial task. Choosing $d$ to small might induce a loss of relevant information on the dependence structure of the data. This is especially relevant as the basis chosen by fPCA for dimension reduction might not optimal in terms of capturing the dependence structure. On the other hand, choosing $d$ to large leads to very high dimensional vector models where estimators have potentially high variances. We stress again that the choice of $d$ is highly dependent on the goal of the practitioner. 
%
To the best of our knowledge, there are no contributions in the context of model selection in functional moving average models. This section introduces several procedures. %We therefore propose different options for the selection of $d$ and $q$, where the choice of which method to use depends on the goal of the practitioner. Deriving consistency results of the methods goes beyond the scope of the paper, but we conduct simulation studies to evaluate the performance. 
%There are two different approaches in the selection of $d$ and $q$: one can either choose $d$ and $q$ simultaneously or start by selecting $d$ and then perform the selection of $q$ on the resulting $d$-dimensional vector process. 
A method for the selection of the subspace dimension is introduced in Section~\ref{sec:cvpind}, followed by a method for the FMA order selection in Section~\ref{sec:AICC}. A criterion for the simultaneous selection is in Section~\ref{sec:fFPE}.

\subsection{Selection of principal directions}%Total variance explained and a test of independence} 
\label{sec:cvpind}

The most well-known method for the selection of $d$ in functional data analysis is based on total variance explained, TVE, where $d$ is chosen such that the first $d$ eigenfunctions of the covariance operator explain a predetermined amount $P$ of the variability; see, for example, Horv\'ath and Kokoszka \cite{horvath}. %The method is shown to be optimal in the context of i.i.d. data (e.g. \cite{horvath}) in terms of representation of the variability. However, as is stated in  \cite{hoermann}, the choice of $d$ is just shifted to the choice of $P$, the 'correct' amount of variability to be explained. %Furthermore, in a setting with temporal dependence in the data, other basis functions might be more suited for dimension reduction (see \cite{hoermann}). 
%The goal of a dimension reduction in this context is to capture as much of the dependence in the data as possible, which is not guaranteed by a CPV type method. 
In order to apply the TVE criterion in the functional time series context, one has to ensure that no essential parts of the dependence structure in the data are omitted after the projection into principal directions. This is achieved as follows. First choose an initial $d^*$ with the TVE criterion such with a fraction $P$ of variation in the data is explained. This should be done conservatively. Then apply the portmanteau test of Gabrys and Kokoszka \cite{gabrys} to check whether the non-projected part $(I_H-P_{\mathcal{V}_{d^*}})X_1,\ldots, (I_H-P_{\mathcal{V}_{d^*}})X_n$ of the observed functions $X_1,\ldots,X_n$ can be considered independent. Modifying their test to the current situation, yields the statistic
\begin{align}\label{testind}
	Q_n^{d^*}= n\sum_{h=1}^{\bar{h}} \sum_{\ell,\ell^\prime=d^*+1}^{d^*+p} f_{h}(\ell,\ell^\prime) b_h(\ell,\ell^\prime),
\end{align}
where $f_h(\ell,\ell^\prime)$ and $b_h(\ell,\ell^\prime)$ denote the $(\ell,\ell^\prime)$th entries of $C_{\mathbf{X}^*;0}^{-1}C_{\mathbf{X}^*;h}$ and $C_{\mathbf{X}^*;h}C_{\mathbf{X}^*;0}^{-1}$, respectively, and $(\textbf{X}_j^*\colon j\in\Z)$ is the $p$-dimensional vector process consisting of the $d+1$st to $d+p$th eigendirections of the covariance operator $C_X$. Following Gabrys and Kokoszka \cite{gabrys}, it follows under the assumption of independence of the non-projected series that $Q_N^{d^*}\rightarrow\chi^2_{{p}^2\bar{h}}$ in distribution. If the assumption of independence is rejected, set $d^*=d^*+1$. Repeat the test until the independence hypothesis cannot be rejected and choose $d=d^*$ to estimate the functional linear filters. This leads to the following algorithm.

\begin{algo}[\bf Test for independence]
\label{IND}
Perform the following steps.
\vspace{-.2cm}
\begin{enumerate}\itemsep-.2ex
\item[(1)] For given observed functional time series data $X_1,\ldots,X_n$, estimate the eigenpairs $(\wh\lambda_1,\wh\nu_1),\dots,(\wh\lambda_n,\wh\nu_n)$ of the covariance operator $C_X$. Select $d^*$ such that
\begin{align*}
\mathrm{TVE}(d^*)=\frac{\sum_{i=1}^{d^*} \wh\lambda_i}{\sum_{i=1}^{n} \wh\lambda_i}\geq P
\end{align*}
for some prespecified $P\in(0,1)$.
\item[(2)] While $Q_n^{d^*}>q_{\chi^2_{{p}^2\bar{h}},\alpha}$, set $d^*=d^*+1$.
\item[(3)] If $Q_n^{d^*}\leq q_{\chi^2_{{p}^2\bar{h}},\alpha}$ stop and apply Algorithm \ref{fia} with $d_i=d^*$, for all $i\leq k$. 
\end{enumerate}
\end{algo} 

Note that the Algorithm \ref{IND} does not specify the choices of $P$, $p$, $H$ and $\alpha$. Recommendations on their selection are given in Section~\ref{sec:sim}. Multiple testing could potentially be an issue, but intensive simulation studies have shown that, since $d^*$ is initialized with the TVE criterion, usually no more than one or two iterations and tests are required for practical purposes. Therefore the confidence level is not adjusted, even though it would be feasible to incorporate this additional step into the algorithm.

\subsection{Selection of FMA order}%Choice of $q$ based on multivariate process}
\label{sec:AICC}

For a fixed $d$, multivariate model selection procedures can be applied to choose $q$. In fact, it is shown in Theorem~4.7 of Klepsch and Kl\"uppelberg \cite{kk} that the projection of an FMA$(q)$ process on a finite-dimensional space is a VMA$(q^*)$ with $q^*\leq q$. Assuming that the finite-dimensional space is chosen such that no information on the dependence structure of the process is lost, $q=q^*$. Then, the FMA order $q$ may be chosen by performing model selection on the $d$-dimensional vector model given by the first $d$ principal directions of $(X_j\colon j\in\Z)$. Methods for selecting the order of VMA models are described, for example, in Chapter 11.5 of Brockwell and Davis \cite{brockwell}, and Chapter~3.2 of Tsai \cite{tsai}. 

The latter book provides arguments for the identification of the VMA order via cross correlation matrices. This Ljung--Box (LB) method for testing the null hypothesis $H_0\colon C_{\textbf{X};\underline{h}} = C_{\textbf{X};\underline{h}+1 } = \dots =  C_{\textbf{X};\overline{h}} = 0$ versus the alternative that $C_{\textbf{X};h}\neq 0$ for a lag $h$ between $\underline{h}$ and $\overline{h}$ is based on the statistic  
\begin{align} \label{ljungbox}
	Q_{\underline{h},\overline{h}}
	= n^2 \sum_{h=\underline{h}}^{\overline{h}} \frac{1}{n-h} 
	\mathrm{tr} ( \wh C_{\textbf{X};h}^\top \wh C_{\textbf{X};0}^{-1} \wh C_{\textbf{X};h}^{\phantom{-1}}C_{\textbf{X};0 } ^{-1}).
\end{align}
Under regularity conditions $Q_{\underline{h},\overline{h}}$ is asymptotically distributed as a $\chi^2 _{d ^2(\overline{h}-\underline{h}+1)}$ random variable if the multivariate procss $(\textbf{X}_j\colon j\in\Z)$ on the first $d$ principal directions follows a VMA$(q)$ model and $\underline{h}>q$. For practical implementation, one computes iteratively $Q_{1,\overline{h}}, Q_{2,\overline{h}},\ldots$ and selects the order $q$ as the largest $\underline{h}$ such that $Q_{\underline{h},\overline{h}}$ is significant, but $Q_{\underline{h}+h,\overline{h}}$ is insignificant for all $h>0$. 

Alternatively, the well-known AICC criterion could be utilized. Algorithm \ref{fia} allows for the computationally efficient maximization of the likelihood function through the use of its innovation form; see Chapter 11.5 of Brockwell and Davis~\cite{brockwell}.
%\begin{align*}%\label{likelihood}
%L (\gamma_1,\dots,\gamma_q,\Sigma) = (2\pi)^{-Nd/2} \, \Big( \prod_{j=1}^N \det {\bf V }_{j-1} \Big) ^{-1/2} \exp\big(-\frac{1}{2} \sum_{j=1}^{N} ({\bf X}_j - {\wh{\bf X}}_j) {\bf V}_{j-1} ({\bf X}_j - \wh{\bf X}_j)^*\big),
%\end{align*}
%where ${\bf V}_j$ and $\wh{\bf X}_j$ are given by the Innovations algorithm in \eqref{xdhat} and \eqref{vd}. 
The AICC criterion is then given by
\begin{align}\label{AICC}
	\mathrm{AICC}(q)=-2 \ln L (\Theta_1,\dots,\Theta_q,\Sigma) + \frac{2 n d ( q  d^2 +1 )}{n d - qd^2 -2 },
\end{align}
where $\Theta_1,\ldots,\Theta_q$ are the fitted VMA coefficient matrices and $\Sigma$ its fitted covariance matrix. The minimizer of \eqref{AICC} is selected as order of the FMA process. Both methods are compared in Section~\ref{sec:sim}.

\subsection{Functional FPE criterion} \label{sec:fFPE}

In this section a criterion that allows to choose $d$ and $q$ simultaneously is introduced. A similar criterion was established in Aue et al.\ \cite{aue}, based on a decomposition of the functional mean squared prediction error. Note that, due to the orthogonality of the eigenfunctions $(\nu_i\colon i\in\N)$ and the fact that $\wh X_{n+1,k}$ lives in $\mathcal{V}_d$, 
\begin{align} \label{decomposition}
\E\big[ \Vert X_{n+1} - \wh X_{n+1,k} \Vert ^ 2\big] 
&= \E \big[\Vert P_{\mathcal{V}_d} (X_{n+1} - \wh X_{n+1,k}) \Vert ^2 \big]+ \E\big[ \Vert (I_H-P_{\mathcal{V}_d}) X_{n+1} \Vert ^2\big].
\end{align}
The second summand in \eqref{decomposition} satisfies $\E[ \Vert (I_H-P_{\mathcal{V}_d}) X_{n+1} \Vert ^2] = \E[ \Vert \sum_{i>d} \langle X_{n+1} , \nu_i \rangle \nu_i \Vert ^2] = \sum_{i>d} \lambda_i$. The first summand in \eqref{decomposition} is, due to the isometric isomorphy between $\mathcal{V}_d$ and $\R^d$ equal to the mean squared prediction error of the vector model fit on the $d$ dimensional principal subspace. It can be shown using the results of Lai and Lee \cite{lai} that it is of order $\mathrm{tr} (C_\mathbf{Z}) + qd\,\mathrm{tr}(C_\mathbf{Z})/n$, where $C_\textbf{Z}$ denotes the covariance matrix of the innovations of the vector process. Using the matrix version $\mathbf{V}_n$ of the operator $V_n$ given through Algorithm~\ref{fia} as a consistent estimator for $C_{\textbf{Z}}$, the functional FPE criterion
\begin{align}
	\text{fFPE}(d,q) = \frac{n + q \,d}{n}\, \mathrm{tr}(\mathbf{V}_n)+\sum_{i>d}\hat\lambda_i \label{fFPE}
\end{align}
is obtained. It can be minimized over both $d$ and $q$ to select the dimension of the principal subspace and the order of the FMA process jointly. As is noted in Aue et al.\ \cite{aue}, where a similar criterion is proposed for the selection of the order of an FAR$(p)$ model, the fFPE method is fully data driven: no further selection of tuning parameters is required. %Furthermore, note the difference between \eqref{fFPE} to the original FPE method introduced in a multivariate context in \cite{akaike}: the fFPE criterion uses the trace instead of the log-determinant. This makes sure that the two terms in \eqref{fFPE} are on the same scale (see \cite{aue}).

%%%%%%%%%%%%%%%%%%%%%%%%
\section{Simulation evidence}
\label{sec:sim}
%%%%%%%%%%%%%%%%%%%%%%%%

%%%%%%%%%%%%%%%%%%%%%%%%
\subsection{Simulation setting}
\label{subsec:sim:setting}
%%%%%%%%%%%%%%%%%%%%%%%%

In this section, results from Monte Carlo simulations are reported. The simulation setting was as follows. Using the first $D$ Fourier basis functions $f_1,\ldots,f_D$, the $D$-dimensional subspace $G^D=\spa\{f_1,\ldots,f_D\}$ of $H$ was generated following the setup in Aue et al.\ \cite{aue}, then the isometric isomorphy between $\R^D$ and $G^D$ is utilized to represent elements in $G^D$ by $D$-dimensional vectors and operators acting on $G^D$ by $D\times D$ matrices. Therefore $N+q$ $D$-dimensional random vectors as innovations for an FMA$(q)$ model and $q$ $D\times D$ matrices as operators were generated. Two different settings were of interest: processes possessing covariance operators with slowly and quickly decaying eigenvalues. Those cases were represented by  selecting two sets of standard deviations for the innovation process, namely
\begin{align}
\sigma_{\text{slow}} = (i^{-1}\colon i=1,\dots,D) 
\qquad\mbox{and}\qquad
\sigma_{\text{fast}} = (2^{-i} \colon i=1,\dots,D).	
\end{align}
With this, innovations 
\begin{align*}
\varepsilon_j= \sum_{i=1}^{D} c_{j,i} f_i,
\qquad j=1-q,\ldots,n,
\end{align*}
were simulated, where $c_{j,i}$ are independent normal random variables with mean $0$ and standard deviation $\sigma_{\cdot,i}$, the $\cdot$ being replaced by either slow or fast, depending on the setting. The parameter operators $\tilde\theta_\ell$, for $\ell=1,\dots,q$, were chosen at random by generating  $D\times D$ matrices, whose entries $\langle \tilde\theta_\ell f_i , f_{i'} \rangle $ were independent zero mean normal random variables with variance $\sigma_{\cdot,i}\sigma_{\cdot,i'}$. The matrices were then rescaled to have spectral norm $1$. Combining the forgoing, the FMA($q$) process  
\begin{align} \label{simfma}
	X_j = \sum_{\ell=1}^q\theta_\ell \varepsilon_{j-\ell} + \varepsilon_j,
	\qquad j=1,\ldots,n
\end{align} 
were simulated, where $\theta_\ell=\kappa_\ell \tilde\theta_\ell$ with $\kappa_\ell$ being chosen to ensure invertibility of the FMA process. In the following section, the performance of the proposed estimator is evaluated, and compared and contrasted to other methods available in the literature for the special case of FMA(1) processes, in a variety of situations.

%%%%%%%%%%%%%%%%%%%%%%%%
\subsection{Estimation of FMA(1) processes}\label{sec:fma1}
%%%%%%%%%%%%%%%%%%%%%%%%

In this section, the performance of the proposed method is compared to two approaches introduced in Turbillon et al.\ \cite{turbillon2} for the special case of FMA(1) processes. These methods are based on the following idea.
%\begin{align*}
%	X_n = \kappa_1 \gamma_1 \varepsilon_{n-1} + \varepsilon_n, \quad n\in \Z,
%\end{align*}
%where $(\varepsilon_n)_{n\in\Z}$ is white noise and has finite fourth moments and $\gamma_1 \in \call$. In order to be able to evaluate the performance of our method, we compare it with two estimators of FMA$(1)$ models, developed by Bosq and Turbillon in \cite{turbillon}. We start by introducing those two estimators.
%
Denote by $C_\varepsilon$ the covariance operator of $(\varepsilon_n\colon n\in\Z)$.
Observe that since $C_{X;1}= \theta_1 C_\varepsilon $ and $C_{X} = C_\varepsilon + \theta_1 C_\varepsilon \theta_1^*$, it follows that $\theta_1 C_X  = \theta_1 C_\varepsilon+\theta_1^2C_\varepsilon \theta_1 ^\ast=C_{X;1}+\theta_1^2C_{X;1}^\ast$,
and especially
\begin{align}
\theta_1 ^2 C_{X;1}^\ast-\theta_1 C_X+ C_{X;1}=0.
\label{lquad}
\end{align}
The estimators in Turbillon et al.\ \cite{turbillon2} are based on solving the quadratic equation in \eqref{lquad} for $\theta_1$. The first of these only works under the restrictive assumption that $\theta_1$ and $C_\varepsilon$ commute. Then, solving \eqref{lquad} is equivalent to solving univariate equations generated by individually projecting \eqref{lquad} onto the eigenfunctions of $C_X$. %Solving the univariate quadratic equations for the diagonal entries of $\wh \gamma_1$ leads to a consistent estimator of $\gamma_1$ under assumptions specified in \cite{turbillon2}. However the assumption that $\gamma_1$ and $C_\varepsilon$ commute is very restrictive. Furthermore, as we will see in the upcoming results, a breach of this condition leads to poor performance of the method.
The second approach is inspired by the Riesz--Nagy method.  It relies on regarding (\ref{lquad}) as a fixed-point equation and therefore establishing a fixed-point iteration. %Suppose we want to find a symmetric operator $A$ satisfying $A^2 R - A + R=0$ holds for some positive definite and symmetric operator $R$ with $\Vert R \Vert_\call \leq 1/2$.  Observe that the recursion
%\[
%A_0:=0,\qquad A_{r+1}:=\frac12(R+A^2_r),\qquad\forall r\geq0,
%\]
%can be established, and by \cite{vigier} and Proposition~2.1 in \cite{turbillon}, $A_r$ converges to the unique solution of  $A^2 R - A + R=0$. To use the result to solve \eqref{lquad} one could think of multiplying \eqref{lquad} by $C_X^{-1}$. As $C_X$ is generally not invertible on $H$, one can project the equation on $A_d$ as defined in \eqref{xdi}. Here $P_{A_d} C_X P_{A_d}$ is invertible on $A_d$ and the above recursion can be used to solve \eqref{lquad} on $A_d$. 
Since solutions may not exist in $H$, suitable projections have to be applied. Consistency of both estimators is established in Turbillon et al.\ \cite{turbillon2}. 

%We do not give detailed information on those methods, but note that under technical assumptions consistency of both methods is shown in \cite{turbillon}. Furthermore we note that similarly to our method, for practical implementation both methods require the projection of the data on some finite dimensional subspace $A_d$ of $H$.

To compare the performance of the methods, FMA$(1)$ time series were simulated as described in Section~\ref{subsec:sim:setting}. As measure of comparison the estimation error $\Vert \theta_1 - \wh\theta_1\Vert_\call$ was used after computing $\wh\theta_1$ with the three competing procedures. Rather than selecting the dimension of the subspace via Algorithm \ref{IND}, the estimation error is computed for $d=1,\ldots,5$. The results are summarized in Table~\ref{esterror0.8}, where estimation errors were averaged over 1000 repetitions for each specification, using sample sizes $n=100,500$ and $1{,}000$. 
\begin{table}[ht]
\vspace{.3cm}
\centering
\begin{tabular}{rrrrrrrrrrr}
	\hline
	&\ & \multicolumn{3}{c}{\textbf{$n=100$}} &  \multicolumn{3}{c}{\textbf{$n=500$}} \ & \multicolumn{3}{c}{\textbf{$n=1000$}} \\
	&$d$	& Proj &  Iter & Inn  &Proj   & Iter & Inn & Proj &  Iter & Inn \\ 
	\hline
	\multirow{5}{*}{$\sigma_\text{fast}$} &
	1 & 0.539 & 0.530 & 0.514 & 0.527 & 0.521 & 0.513 & 0.518 & 0.513 & 	0.508 \\
	&2 & 0.528 & 0.433 & \bf 0.355 & 0.508 & 0.391 & 0.287 & 0.500 & 0.386 & 0.277 \\
	&3 & 0.533 & 0.534 & 0.448 & 0.512 & 0.467 & \bf 0.235 & 0.503 & 0.460 & 	\bf 0.197 \\
	&4 & 0.534 & 0.650 & 0.582 & 0.513 & 0.573 & 0.276 & 0.504 & 0.567 & 	0.216 \\
	&5 & 0.534 & 0.736 & 0.646 & 0.513 & 0.673 & 0.311 & 0.504 & 0.662 & 	0.239 \\
	%&6 & 0.534 & 0.782 & 0.663 & 0.513 & 0.722 & 0.322 & 0.504 & 0.711 & 0.246 \\
	%7 & 0.534 & 0.798 & 0.668 & 0.513 & 0.742 & 0.325 & 0.504 & 0.731 & 0.248 \\
	%8 & 0.534 & 0.803 & 0.669 & 0.513 & 0.748 & 0.325 & 0.504 & 0.735 & %0.249 \\
	\hline
	\multirow{6}{*}{$\sigma_\text{slow}$} &
	1 & 0.610 & 0.602 & 0.588 & 0.579 & 0.574 & 0.566 & 0.575 & 0.573 & 0.569 \\
	&2 & 0.614 & 0.527 & \bf 0.513 & 0.581 & 0.487 & 0.434 & 0.577 & 0.483 & 0.422 \\
	&3 & 0.618 & 0.552 & 0.610 & 0.583 & 0.504 & \bf 0.389 & 0.578 & 0.500 & 0.362 \\
	&4 & 0.620 & 0.591 & 0.861 & 0.584 & 0.531 & 0.402 & 0.579 & 0.522 & \bf 0.344 \\
	&5 & 0.620 & 0.630 & 1.277 & 0.584 & 0.556 & 0.448 & 0.579 & 0.548 & 0.358 \\
	%&6 & 0.621 & 0.669 & 1.985 & 0.584 & 0.583 & 0.519 & 0.579 & 0.570 & 0.393 \\
	%7 & 0.621 & 0.703 & 3.343 & 0.584 & 0.610 & 0.603 & 0.579 & 0.592 & %0.441 \\
	%8 & 0.621 & 0.729 & 6.473 & 0.584 & 0.634 & 0.706 & 0.579 & 0.612 & %0.502 \\
	\hline
\end{tabular}
\label{esterror0.8}
\caption{Estimation error $\Vert \theta_1 -{\wh{\theta}}_1 \Vert_\call$, with $\theta_1=\kappa_1\tilde\theta_1$ and $\kappa_1 = 0.8$, with $\wh{\theta}_1$ computed with the projection method (Proj) and the iterative method (Iter) of \cite{turbillon2}, and the proposed method based on the functional Innovations Algorithm (Inn). The smallest estimation error is highlighted in bold for each case.}
\end{table}

For all three sample sizes, the operator kernel estimated with the proposed algorithm is closest to the real kernel. As can be expected, the optimal dimension increases with the sample size, especially for the case where the eigenvalues decay slowly. The projection method does not perform well, which is also to be expected, because the condition of commuting $\theta_1$ and $C_\varepsilon$ is violated. One can see that the choice of $d$ is crucial: especially for small sample sizes for the proposed method, the estimation error explodes for large $d$. In order to get an intuition for the shape of the estimators, the kernels of the estimators resulting from the different estimation methods, using $n=500$ and $\kappa_1=0.8$, are plotted in Figure~\ref{kernelslow0.8}. It can again be seen that the projection method yields results that are significantly different from both the truth and the other two methods who produce estimated operator kernels, whose shapes look roughly similar to the truth. 
\begin{figure}[ht]
	\begin{center}
		% Requires \usepackage{graphicx}[height=7in,width=6.25in]
		\includegraphics[width=\linewidth]{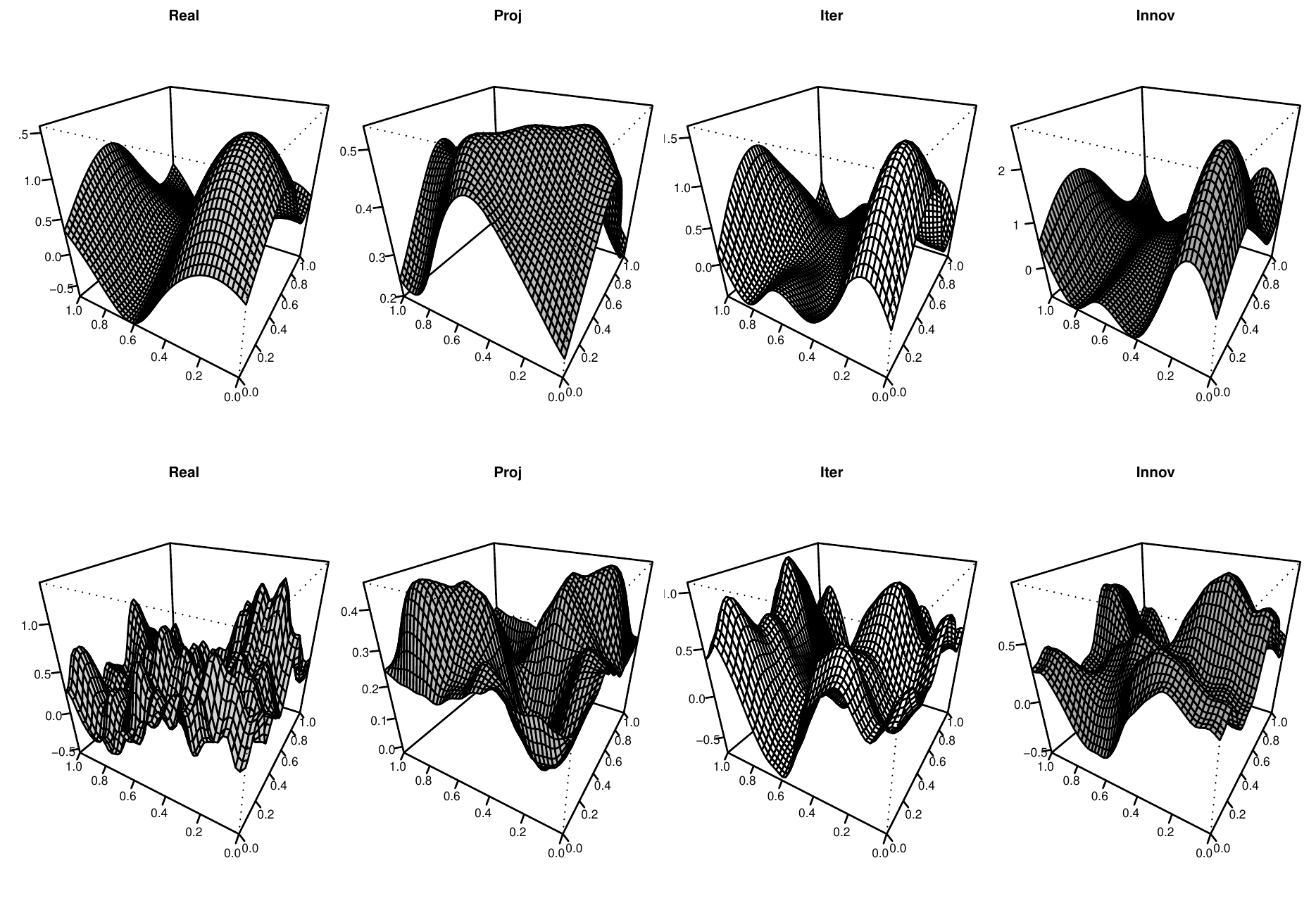}\\
		\caption{Estimated operator kernel of simulated FMA$(1)$ process with $\kappa_1=0.8$, $d=3$ and $\sigma_\text{fast}$  (first row) and $\sigma_{\text{slow}}$ (second row), using $n=500$ sampled functions. Labeling of procedures is as in Table \ref{esterror0.8}.
		\label{kernelslow0.8}}
	\end{center}
\end{figure}

%%%%%%%%%%%%%%%%%%%%%%%%
\subsection{Model selection}
%%%%%%%%%%%%%%%%%%%%%%%%

In this section, the performance of the different model selection methods introduced in Section~\ref{sec:selection} is demonstrated. To do so, FMA(1) processes with weights $\kappa_1=0.4$ and $0.8$ were simulated as in the previous section. In addition, two different FMA$(3)$ processes were simulated according to the setting described in Section \ref{subsec:sim:setting}, namely
%\begin{align*}
% X_n =\kappa_1 \gamma_1 \varepsilon_{n-1} + \kappa_2\gamma_2 \varepsilon_{n-2} +\kappa_3\gamma_3 \varepsilon_{n-3} + \varepsilon_n, \quad n\in \Z.
%\end{align*}
\vspace{-.1cm}
\begin{itemize}\itemsep-.2ex
	\item Model 1: $\kappa_1=0.8$, $\kappa_2=0.6$, and $\kappa_3=0.4$. 
	\item Model 2: $\kappa_1=0$, $\kappa_2=0$, and $\kappa_3=0.8$.
\end{itemize}
For sample sizes $n=100$, $500$ and $1{,}000$, $1{,}000$ processes of both Model 1 and 2 were simulated using  $\sigma_\text{slow}$ and $\sigma_\text{fast}$. The estimation process was done as follows. First, the dimension $d$ of the principal projection subspace was chosen using Algorithm \ref{IND} with TVE such that $P=0.8$. With this selection of $d$, the LB and AICC criteria described in Section~\ref{sec:AICC} were applied to choose $q$. Second, the fFPE criterion was used for a simultaneous selection of $d$ and $q$.  The results are summarized in Figures~\ref{boxplotma1} and \ref{boxplotma3}.

\begin{figure}[ht]
	\begin{center}
		% Requires \usepackage{graphicx}[height=7in,width=6.25in]
		\includegraphics[width=\linewidth]{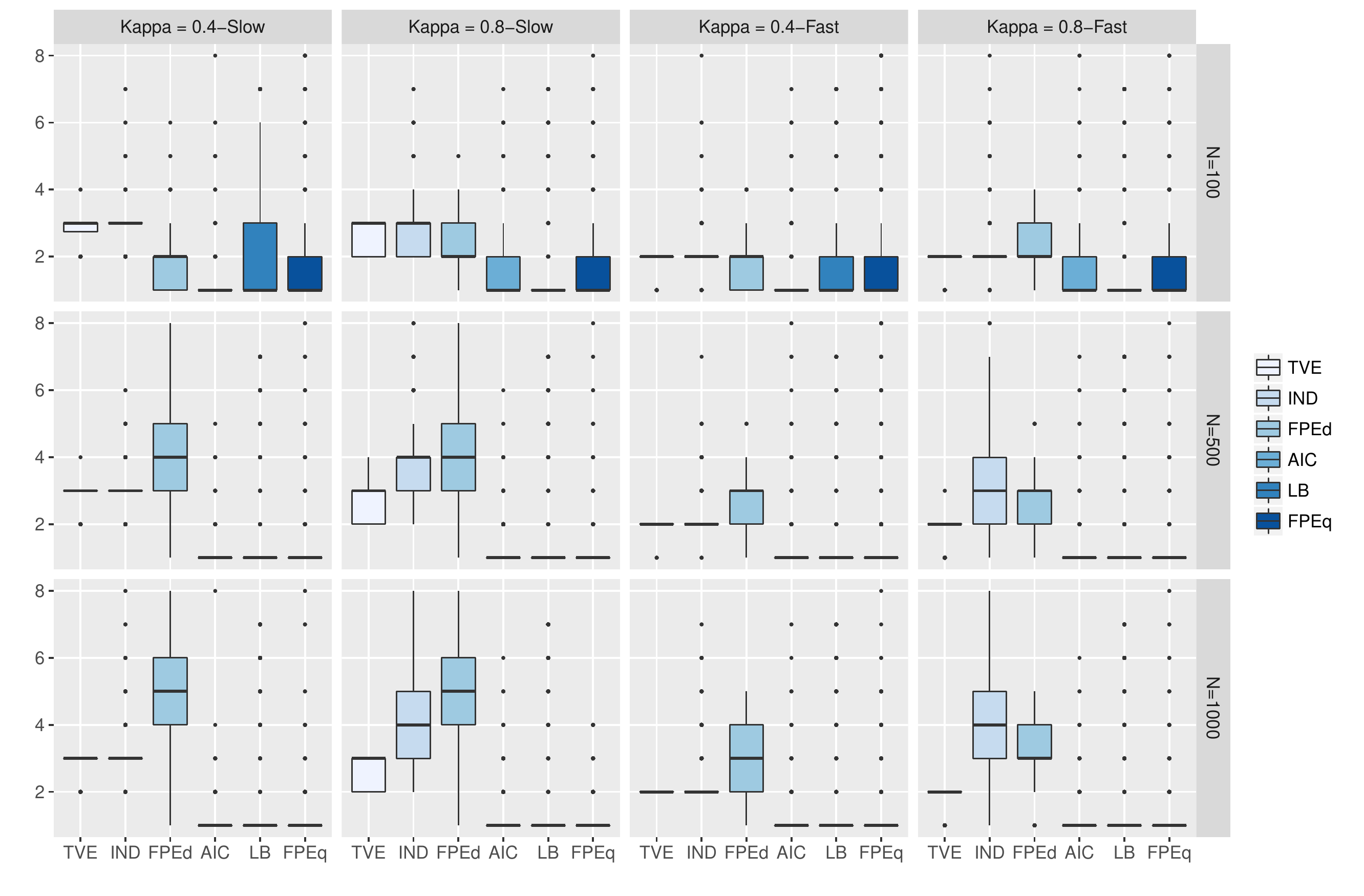}\\
		\caption{Model selection for different MA(1) processes. The left three plots in each small figure give the $d$ chosen by total variation explained with $P=0.8$ (TVE), Algorithm \ref{IND} (IND) and the functional FPE criterion (FPEd). The right three plots in each small figure give the selected order $q$ by AICC, LB and fFPE.
		\label{boxplotma1}}
	\end{center}
\end{figure}
\begin{figure}[ht]
	\begin{center}
		% Requires \usepackage{graphicx}[height=7in,width=6.25in]
		\includegraphics[width=\linewidth]{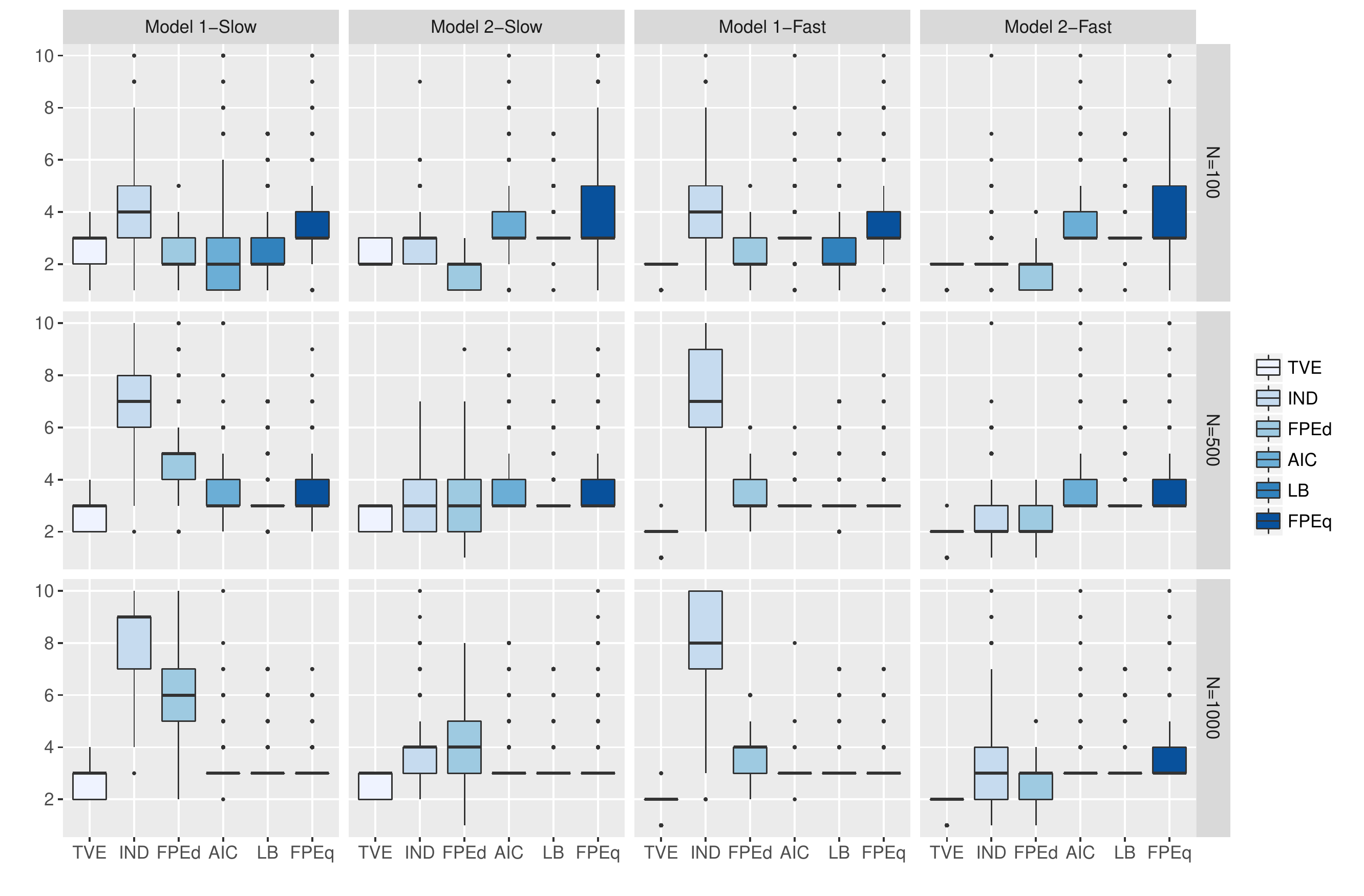}\\
		\caption{Model selection for different MA(3) processes. Labeling of procedures is as in Figure~\ref{boxplotma1}.}
		\label{boxplotma3}
	\end{center}
\end{figure}

Figures~\ref{boxplotma1} and \ref{boxplotma3} allow for a number of interesting observations. For both the FMA$(1)$ and the FMA$(3)$ example, the model order is estimated well. In all cases, especially for sample sizes larger than 100, all three selection methods (AIC, LB, FPEq) for the choice of $q$ yield the correct model order (1 or 3). The Ljung--Box (LB) method seems to have the most stable results. The methods for the choice of $d$ are more heterogeneous. The TVE method yields the most stable results among different sample sizes. For $\sigma_\text{fast}$, it almost always selects $d=2$ and for $\sigma_\text{slow}$ the choice varies between $d=2$ and $d=3$. However, the TVE method seems to underestimate $d$. Often there appears to be dependence left in the data, as one can see from the selection of $d$ by Algorithm \ref{IND}. Especially in the FMA$(3)$ case and Model~1, this algorithm yields some large choices for $d$ of about $7$ or $8$. The choice of FPEd seems to increase with increasing sample size: this is to be expected as for increasing sample size the variance of the estimators decreases and the resulting predictors get more precise, even for high-dimensional models. This is valid especially for $\sigma_\text{slow}$ where a larger $d$ is needed to explain the dynamics of the functional process. A similar trade-off is occasionally observed for Algorithm \ref{IND}.

%%%%%%%%%%%%%%%%%%%%%%%%
\section{Application to traffic data}
\label{sec:app}
%%%%%%%%%%%%%%%%%%%%%%%%

In this section, the proposed estimation method is applied to vehicle traffic data provided by the Autobahndirektion S\"udbayern. The dataset consists of measurements at a fixed point on a highway (A92) in Southern Bavaria, Germany. Recorded is the average velocity per minute from 1/1/2014 00:00 to 30/06/2014 23:59 on three lanes. After taking care of missing values and outliers, the velocity per minute was averaged over the three lanes, weighted by the number of vehicles per lane. This leads to $1440$ preprocessed and cleaned data points per day, which were transformed into functional data using the first $30$ Fourier basis functions with the \texttt{R} package \texttt{fda}. The result is a functional time series $(X_j\colon j=1,\ldots,n=119)$, which is deemed stationary and exhibits temporal dependence, as evidenced by Klepsch et al.\ \cite{KKW}.

The goal then is to approximate the temporal dynamics in this stationary functional time series with an FMA fit. Observe that the plots of the spectral norms $\Vert \wh C_{\mathbf{X};h}\wh C_{\mathbf{X};0}^{-1}\Vert_\call$ for $h=0,\dots,5$ in Figure \ref{acf} display a pattern typical for MA models of low order. Here $\mathbf{X}$ stands for the multivariate auxiliary model of dimension $d$ obtained from projection into the corresponding principal subspace.
\begin{figure}[ht]
	\begin{center}
		% Requires \usepackage{graphicx}[height=7in,width=6.25in]
		\includegraphics[width=\linewidth]{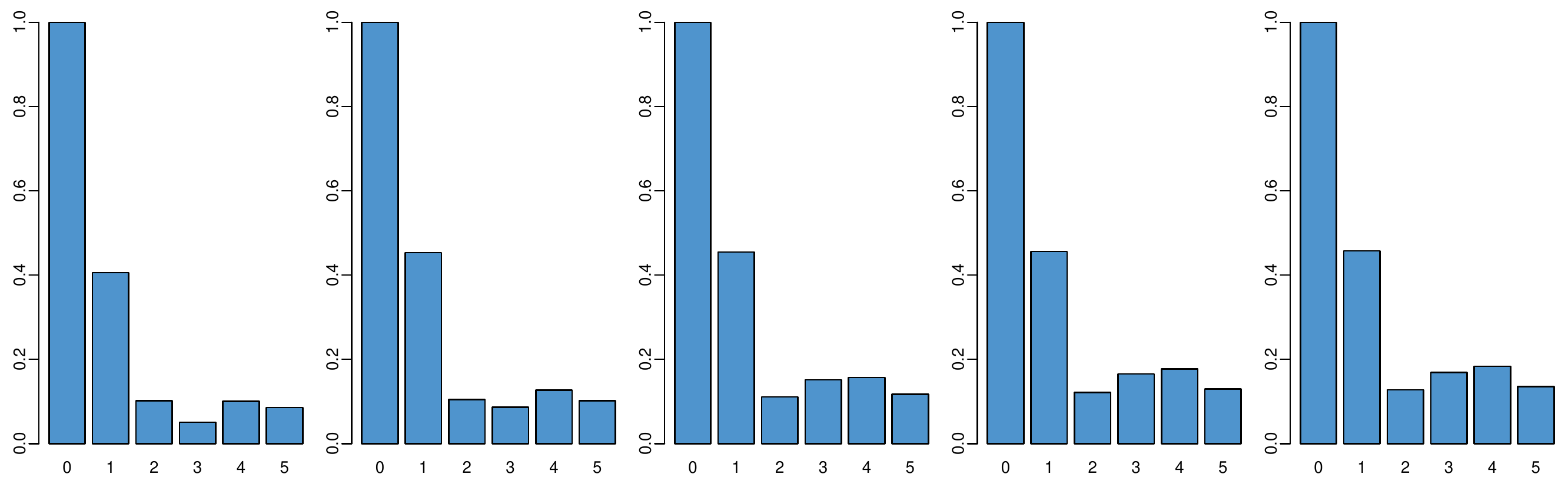}\\
		\caption{Spectral norm of estimated cross-correlation matrices for lags $h=1,\dots,5$ of the vector model based on principal subspaces of dimension $d=1$ to $d=5$ (from left to right).}
		\label{acf}
	\end{center}
\end{figure} 
Consequently, the methodology introduced in Section~\ref{sec:methodology} and \ref{sec:selection} was applied to the data. First, the covariance operator $C_{X;0}$ and its first $15$ eigenelements $(\lambda_1,\nu_1), \dots,(\lambda_{15},\nu_{15})$ were estimated to construct the vector process $(\hat{\mathbf{X}}_j\colon j=1,\ldots,n)$, where $\hat{\mathbf{X}}_j=( \langle X_j , \hat\nu_1\rangle,\dots,\langle X_j , \hat\nu_{15}\rangle)^\top$. Then, the methods described in Sections~\ref{sec:selection} were applied to choose the appropriate dimension $d$ and model order $q$. 

The first four sample eigenfunctions explained 81\% of the variability, hence the TVE criterion with $P=0.8$ gave $d^*=4$ to initialize Algorithm~\ref{IND}. The hypothesis of independence of the left-out score vector process $(\hat{\mathbf{X}}_j[4\!\!:\!\!15]\colon j=1,\ldots,n)$ was rejected with $p$-value $0.03$. Here $\mathbf{X}_j[i\!\!:\!\!i^\prime]$ is used as notation for the vector comprised of coordinates $i,\ldots,i^\prime$, with $i\leq i^\prime$, of the original 15-dimensional vector $\hat{\mathbf{X}}_j$. In the next step of Algorithm \ref{IND}, $d^*$ is increased to $5$. A second independence test was run on $(\hat{\mathbf{X}}_j[5\!:\!15]\colon j=1,\ldots,n)$ and did not result in a rejection; the corresponding $p$-value was $0.25$. 

This analysis led to using $d=5$ as dimension of the principal subspace to conduct model selection with the methods of Section~\ref{sec:AICC}. Since TVE indicated $d=4$, the selection procedures were applied also with this choice. In both cases, the AICC criterion in \eqref{AICC} and LB criterion in \eqref{ljungbox} opted for $q=1$, in accordance with the spectral norms observed in Figure~\ref{acf}. Simultaneously choosing $d$ and $q$ with the fFPE criterion of Section~\ref{sec:fFPE} yields $d=3$ and $q=1$. 

After the model selection step, the operator of the chosen FMA$(1)$ process was estimated using Algorithm \ref{fia}. Similarly the methods introduced in Section~\ref{sec:fma1} were applied. Figure \ref{realtheta} displays the kernels of the estimated integral operator for all methods, selecting for $d=3$ and $d=5$. The plots indicate that, on this particular data set, all three methods produce estimated operators that lead to kernels of roughly similar shape. 
The similarity is also reflected in the covariance of the estimated innovations.  For $d=3$, the trace of the covariance matrix is $43.14$, $45.4$ and $44.41$ for the Innovations Algorithm, iterative method and projective method, respectively. For $d=4$, the trace of the covariance of the estimated innovations is $48.19$, $46.00$ and $45.74$ for the different methods in the same order. 
%This result furthermore confirms the observation that the Innovations Algorithm approach is sensible to high $d$.

\begin{figure}[ht]
	\begin{center}
		% Requires \usepackage{graphicx}[height=7in,width=6.25in]
		\includegraphics[width=\linewidth]{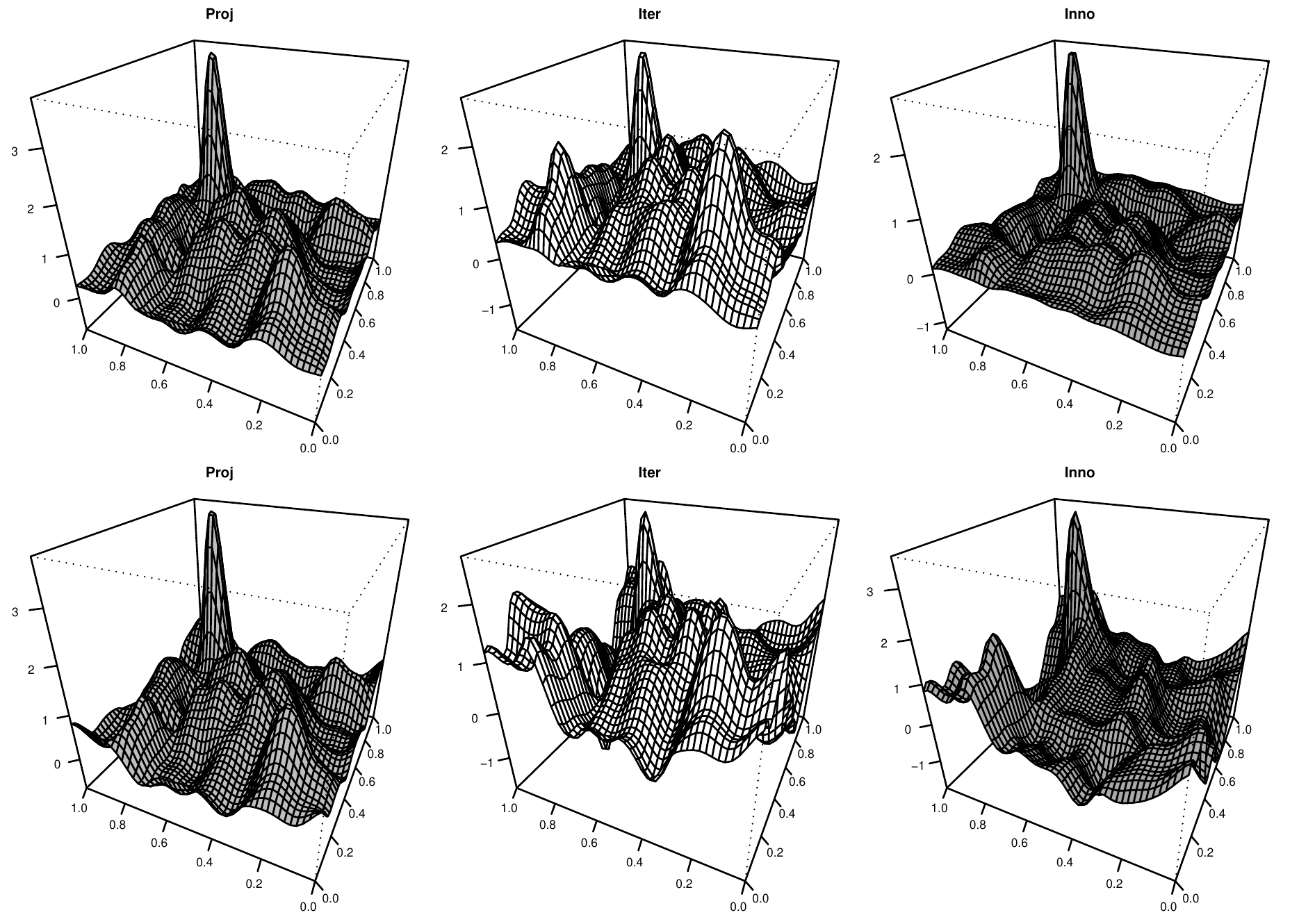}\\
		\caption{Estimated FMA$(1)$ kernel with the three methods for $d=3$ (first row) and $d=4$ (second row)}
		\label{realtheta}
	\end{center}
\end{figure}

%%%%%%%%%%%%%%%%%%%%%%%%
\section{Conclusions}
\label{sec:conclusion}
%%%%%%%%%%%%%%%%%%%%%%%%

This paper is the first to introduce a complete methodology to estimate any stationary, causal and invertible functional time series. This is achieved by approximating the functional linear filters in the causal %invertible
representation with functional moving average processes obtained from an application of the functional Innovations Algorithm. The consistency of the estimators is verified as the main theoretical contribution. The proof relies on the fact that $d$-dimensional projections of FMA($q$) processes are isomorph to $d$ dimensional VMA($q^*$) models, with $q^*\leq q$. Introducing appropriate sequences of increasing subspaces of $H$, consistency can be established in the two cases of known and unknown principal projection subspaces. This line of reasoning follows multivariate techniques given in Lewis and Reinsel \cite{lewis} and Mitchell and Brockwell \cite{mitchell}.

The theoretical underpinnings are accompanied by model selection procedures facilitating the practical implementation of the proposed method. An independence test is introduced to select the dimension of the principal projection subspace, which can be used as a starting point for the suggested order selection procedures based on AICC and Ljung--Box criteria. Additionally, an fFPE criterion is established that jointly selects dimension $d$ and order $q$. Illustrative results from a simulation study and the analysis of traffic velocity data show that the practical performance of the proposed method is satisfactory and at least competitive with other methods available in the literature for the case of FMA(1) processes. 

Future research could focus on an extension of the methodology to FARMA processes in order to increase parsimony in the estimation. It should be noted, however, that this not a straightforward task as identifying the dynamics of the projection of an FARMA$(p,q)$ to a finite-dimensional space is a non-resolved problem. In addition, the proposed methodology could be applied to offer an alternative route to estimate the spectral density operator, a principal object in the study of functional time series in the frequency domain; see Aue and van Delft \cite{avd}, H\"ormann et al.\ \cite{hoermann} and Panaretos and Tavakoli \cite{panaretos}.

%%%%%%%%%%%%%%%%%%%%%%%%
\section{Proofs}
\label{sec:proof}
%%%%%%%%%%%%%%%%%%%%%%%%

The notion of $L^p$-$m$-approximability is utilized for the proofs. A version of this notion was used for multivariate time series in Aue et al.\ \cite{AHHR} and then translated to the functional domain by H\"ormann and Kokoszka~\cite{weaklydep}. The definition is as follows.

\begin{definition}
\label{def:lpm}
{\rm 
Let $p\geq 1$. A sequence $(X_j\colon j\in\mathbb{Z})$ with values in $L^p_H$ is called {\it $L^p$-$m$-approximable}\/ if
\[
X_j=f(\varepsilon_j,\varepsilon_{j-1},\ldots),
\qquad j\in\mathbb{Z},
\]
can be represented as a functional Bernoulli shift with a sequence of independent, identically distributed random elements $(\varepsilon_j\colon j\in\mathbb{Z})$ taking values in the measurable space $S$, potentially different from $H$, and a measurable function $f\colon S^\infty\to H$ such that 
\[
\sum_{m=0}^\infty\big(\E[\|X_j-X_{j}^{(m)}\|^p]\big)^{1/p}<\infty,
\] 
where $X_{j}^{(m)}=f(\varepsilon_j,\ldots,\varepsilon_{j-m+1},\varepsilon_{j-m}^{(j)},\varepsilon_{j-m-1}^{(j)},\ldots)$ with $(\varepsilon_{j}^{(i)}\colon j\in\Z)$, $i\in\N_0$, being independent copies of $(\varepsilon_j\colon j\in\mathbb{Z})$.
}
\end{definition}

Conditions can be established for most of the common linear and nonlinear functional time series models to be $L^p$-$m$-approximable. In particular, the functional linear processes $(X_j\colon j\in\mathbb{Z})$ defined in \eqref{eq:flp} are naturally included if the summability condition $\sum_{m=1}^\infty\sum_{\ell=m}^\infty\|\psi_\ell\|_{\call}<\infty$ is met (see Proposition 2.1 in H\"ormann and Kokoszka \cite{weaklydep}).

\begin{proof}[\bf Proof of Theorem \ref{l4mapp}]
Using that $(X_j\colon j\in\Z)$ is $L^4$-$m$-approximable, write %$X_j=f(\varepsilon_n,\varepsilon_{n-1},\dots)$ and note that $\sum_{m=0}^{\infty} (\E[\Vert X_m - X_m^{(m)}\Vert ^4])^{1/4}<\infty$. Now,
\begin{align*}
X_j(k)  
&=(f(\varepsilon_j,\varepsilon_{j-1},\dots),\ldots,f(\varepsilon_{j-k+1},\varepsilon_{j-k},\dots))^\top \\
& = g(\varepsilon_j,\varepsilon_{j-1},\dots),
\end{align*}
where $g\colon H^\infty\to H^k$ is defined accordingly. For $k,m\in\N$ and $j\in\Z$, define 
\begin{align*}
X_j^{(m)}(k)
&= \big( f(\varepsilon_{j},\dots,\varepsilon_{j-m+1},\varepsilon_{j-m}^{(j)},\varepsilon_{j-m-1}^{(j)},\ldots),\ldots,
%f(\varepsilon_{n-1},\varepsilon_{n-2},\dots,\varepsilon_{n-r+1},\varepsilon_{n-r}^{(n)},\varepsilon_{n-r-1}^{(n)}),\dots,\\
f(\varepsilon_{j-k+1},\dots,\varepsilon_{j-m+1},\varepsilon_{j-m}^{(j)},\varepsilon_{j-m-1}^{(j)},\dots)\big)^\top \\
&=g(\varepsilon_j,\varepsilon_{j-1},\dots,\varepsilon_{j-m+1},\varepsilon_{j-m}^{(j)},\varepsilon_{j-m-1}^{(j)},\ldots).
\end{align*}
%In order to verify $L^4-m$ approximability of $(X_n(k))_{n\in\Z}$, we have to check that $\sum_{r=k}^{\infty} \big(E\Vert X_r(k) - X_r(k)^{(r)}\Vert ^4\big)^{1/4}<\infty$. 
Now, % $E\Vert X_r(k) - X_r(k)^{(r)}\Vert=\sum_{i=0}^{k-1} E\Vert X_{r-i} - X_{r-i}^{(r-i)}\Vert$
by definition of the norm in $H^k$,
\begin{align}
\sum_{m=k}^{\infty} \big(\E\big[\Vert X_m(k) - X_m^{(m)}(k)\Vert ^4\big]\big)^{1/4} 
&= \sum_{m=k}^{\infty}\bigg(\sum_{i=0}^{k-1} \E\big[\Vert X_{m-i}-X_{m-i}^{(m-i)}\Vert^4\big]\bigg)^{1/4} \notag \\
%\end{align*}
%Furthermore by Assumption~\ref{assumptions}, $E\Vert X_n - X_n^{(l-i)}\Vert ^2 \leq E\Vert X_n - X_n^{(l)}\Vert^2$ for  $i\geq 0$, hence
%\begin{align}
%	\sum_{r=k}^{\infty}	 \big( \sum_{i=0}^{k-1} E\Vert X_{r-i}-X_{r-i}^{(r-i)}\Vert^4\big)^{1/4}
&\leq\sum_{m=k}^{\infty}\bigg(\sum_{i=0}^{k-1}\E\big[\Vert X_{m-i}-X_{m-i}^{(m-k)}\Vert^4\big]\bigg)^{1/4}\notag \\
&=\sum_{m=k}^{\infty}\big(k \E\big[\Vert X_{m-k}-X_{m-k}^{(m-k)}\Vert^4\big]\big)^{1/4} \notag\\
		&= k ^{1/4}\sum_{m=0}^{\infty} \big(\E\big[\Vert X_{m}-X_{m}^{(m)}\Vert^4\big]\big)^{1/4}, \label{xrk}
\end{align}
where the first inequality is implied by Assumption \ref{assumptions}, since $\E[\Vert X_j - X_j^{(m-i)}\Vert ^2] \leq \E[\Vert X_j - X_j^{(m)}\Vert^2]$ for all $i\geq 0$, and the last inequality, since $\E[\Vert X_1 - X_1^{(m-k)}\Vert ^2] = \E[ \Vert X_{j} - X_j^{(m-k)}\Vert ^2 ]$ by stationarity. But the right-hand side of \eqref{xrk} is finite because $(X_j\colon j\in\Z)$ is $L^4$-$m$-approximable by assumption. This shows that $(X_j(k)\colon j\in\Z)$ is also $L^4$-$m$ approximable.
		
To prove the consistency of the estimator $\wh C_{X(k)}$, note that the foregoing implies, by Theorem~3.1 in H\"ormann and Kokoszka~\cite{weaklydep}, that the bound
\begin{align*}
n\,\E\big[\Vert \wh C_{X(k)} - C_{X(k)}\Vert_\caln^2\big] \leq U_{X(k)},
\end{align*}
holds, where $U_{X(k)}= \E[\Vert X_1(k) \Vert^4]+ 4 \sqrt{2}(\E[\Vert X_1(k) \Vert^4])^{3/4} \sum_{m=0}^{\infty} (\E[\Vert X_m(k)- X_m^{(m)}(k) \Vert^4])^{1/4}$ is a constant that does not depend on $n$. Since $\E[\Vert X_1(k) \Vert ^4] =k\E[\Vert X_1\Vert ^4]$, \eqref{xrk} yields that $U_{X(k)}=kU_X$,
%\begin{align*}
%U_{X(k)} = k \E \Vert X \Vert ^4 + 4\sqrt{2} k ^{3/4} \, (\E\Vert X \Vert ^4 )^{3/4}  \, k^{1/4} \, \sum_{r=1}^{\infty} ( \E \Vert X_r - X_r^{(r)}\Vert ^4 ) ^{1/4} = k\, U_X,
%\end{align*}
which is the assertion.
\end{proof}

%%%%%%%%%%%%%%%%%%%%%%%%

%\subsection{Proof of Proposition~\ref{autoregressive}}
%For what follows, we require the following Proposition. 
%
%\begin{proposition}[\cite{kk}, Proposition~6.4]\label{prophelp}
%	Under the Assumptions~\ref{assumptions} (i), for every $i, j \in \mathbb{N},j\leq q$ and $x\in H$, as $k\rightarrow \infty$,
%	\begin{compactenum}
%		\item[(i)] $\big\Vert(\beta_{k,i} - \pi_i)( x ) \big\Vert \rightarrow 0$,
%		\item[(ii)] $\big\Vert(\beta_{k,i} - \beta_{k-j,i})(x) \big\Vert \rightarrow 0$,
%		\item[(iii)] $\big\Vert(\theta_{k,i} - \theta_{k-j,i} )(x)\big\Vert \rightarrow 0$,
%		\item[(iv)] $\Vert (\theta_{k,i} - \gamma_i)(x) \Vert\rightarrow 0.$
%	\end{compactenum}
%\end{proposition}

\begin{cor}\label{corhelp}
	The operators $\wh\beta_{k,i}$ from \eqref{yulewalkerhat} and $\wh \theta_{k,i}$ from \eqref{innov} related through
	\begin{align}
	\wh\theta_{k,i}=\sum_{j=1}^i \wh\beta_{k,j} \wh\theta_{k-j,i-j}, \qquad   i=1,\dots,k,\; k\in\N. 
	\label{linkhat}
	\end{align}
\end{cor}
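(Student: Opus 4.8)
The plan is to recognize that the population identity \eqref{link} linking $\beta_{k,j}$ and $\theta_{k,i}$ is a purely algebraic consequence of the equations defining these operators, so that it transfers verbatim to the sample quantities once every covariance operator is replaced by its empirical counterpart. Recall how the population version is obtained. Unrolling \eqref{innov} one step at a time — i.e.\ applying it with $(n+1,k)$ replaced by $(n+1-i,k-i)$ and reindexing — shows that the one-step prediction residuals $U_m=X_{d_{k+1-m},n+1-m}-\tilde X_{n+1-m,k-m}$, $m=1,\dots,k$, satisfy
\[
X_{d_{k+1-i},n+1-i}=\sum_{m=i}^{k}\theta_{k-i,m-i}\,U_m,\qquad i=1,\dots,k,
\]
with the convention $\theta_{m,0}=I$. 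Substituting this into the regression form $\tilde X_{n+1,k}=\sum_{j=1}^{k}\beta_{k,j}X_{d_{k+1-j},n+1-j}$ and collecting the coefficient of each $U_m$ gives $\tilde X_{n+1,k}=\sum_{m=1}^{k}\big(\sum_{j=1}^{m}\beta_{k,j}\theta_{k-j,m-j}\big)U_m$; comparing with $\tilde X_{n+1,k}=\sum_{m=1}^{k}\theta_{k,m}U_m$ from \eqref{innov} and using that $U_1,\dots,U_k$ are mutually orthogonal — each $U_i$ is orthogonal to the closed linear span of the ``older'' regressors, which, because the subspaces $\mathcal{V}_{d_i}$ are nested, contains every $U_m$ with $m>i$ — forces $\theta_{k,m}=\sum_{j=1}^{m}\beta_{k,j}\theta_{k-j,m-j}$, which is \eqref{link}.

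Next I would observe that, by construction, $\wh\beta_{k,j}$ and $\wh\theta_{k,i}$ solve the same systems with $\Gamma_{k,d},\Gamma_{1,k,d},C_{X},C_{X;h}$ and the $V_i$ replaced throughout by $\wh\Gamma_{k,d},\wh\Gamma_{1,k,d},\wh C_{X},\wh C_{X;h}$ and the $\wh V_i$: $\wh B(k)=\wh\Gamma_{1,k,d}\wh\Gamma_{k,d}^{-1}$ is the empirical Yule--Walker identity \eqref{yulewalkerhat}, while $\wh\theta_{k,i}$ and $\wh V_i$ are the output of Algorithm~\ref{fia} fed with the empirical operators (relations \eqref{theta1}--\eqref{vd1} with hats). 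The derivation of \eqref{link} above uses nothing about these operators beyond (a) the Yule--Walker equations, (b) the recursions of Algorithm~\ref{fia}, and (c) the orthogonality of the residuals, which is equivalent to the block-Cholesky ($LDL^{\ast}$-type) factorization of $\Gamma_{k,d}$ with diagonal blocks $V_0,\dots,V_{k-1}$ that Algorithm~\ref{fia} computes. All three hold verbatim for the empirical operators as soon as $\wh\Gamma_{k,d}$ — equivalently each $\wh V_i$ — is invertible on its finite-dimensional range, which is exactly the condition under which \eqref{yulewalkerhat} and the empirical Algorithm~\ref{fia} are well defined. Re-running the argument with hats therefore yields \eqref{linkhat}.

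The only point requiring care — and the closest thing to an obstacle — is that at the sample level there is no random variable $\wh U_m$ to ``be orthogonal'': the residual-orthogonality step of the population derivation must be read as the algebraic block-Cholesky factorization identity satisfied by the empirical covariance operators, or, equivalently, one works throughout with the degenerate empirical inner product $(a,b)\mapsto(N-k)^{-1}\sum_{j}\langle a,X_j(k)\rangle\langle b,X_j(k)\rangle$ on $H^k$, with respect to which the residuals produced by the empirical Algorithm~\ref{fia} are orthogonal by construction. Once this translation is made, every remaining step is a routine transcription of the population argument, and invertibility of $\wh\Gamma_{k,d}$ is simply assumed, since it is already needed to state \eqref{yulewalkerhat}. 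The analogous identity for the unknown-subspace estimators $\wh{\wh\beta}_{k,i}$ and $\wh{\wh\theta}_{k,i}$ of \eqref{yulewalkerhathat} then follows in exactly the same way, with $P_{(k)}$ replaced by $\wh P_{(k)}$.
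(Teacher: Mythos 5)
Your argument is correct and follows exactly the route the paper indicates (the paper itself only sketches it, stating that the proof rests on the finite-sample versions of the regression and innovations formulations and omitting the details): the population identity \eqref{link} is a purely algebraic consequence of the Yule--Walker equations and the Innovations recursions, hence carries over verbatim once all covariance operators are replaced by their empirical counterparts. Your observation that the sample-level ``orthogonality of residuals'' must be read as the block $LDL^{*}$ factorization of $\wh\Gamma_{k,d}$ (equivalently, orthogonality in the empirical inner product) is precisely the right way to make the transcription rigorous.
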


\begin{proof}[\bf Proof]
The proof is based on the finite-sample versions of the regression formulation of \eqref{blp} and the innovations formulation given in \eqref{innov}. Details are omitted to conserve space. %The following two equations are the finite sample versions of their population counterparts \eqref{blp} and \eqref{innov}: %\JK{[should be stated in the beginning, and only a reference should be given here]}
%	\begin{align}
%	\wh X_{n+1}^k &= \sum_{i=1}^{k} \wh \beta_{k,i} X_{d_{k+1-i},n+1-i}, \quad \text{and} \label{pred1}\\
%	\wh X_{n+1}^k &= \sum_{i=1}^{k} \wh \theta_{k,i} ( X_{d_{k+1-i},n+1-i}- \wh X_{n+1-i}^{k-i}), \quad n,k\in\N. \label{pred2}
%	\end{align}
%	Define $\wh\theta_{k,0}=I_H$. Then adding $\wh \theta_{k,0}(X_{d_{k+1},n+1}- \wh X_{n+1}^{k})$ to both sides of \eqref{pred2} yields
%	\begin{align*}
%	X_{d_{k+1},n+1} &= \sum_{i=0}^{k} \wh \theta_{k,i} ( X_{d_{k+1-i},n+1-i}- \wh X_{n+1-i}^{k-i}), \quad n,k\in\N.
%	\end{align*}
%	Plugging that expression for $	X_{d_{k+1},n+1}$ in \eqref{pred1} yields
%	\begin{align}
%	\wh X_{n+1}^k &= \sum_{i=1}^{k} \wh \beta_{k,i} \sum_{j=0}^{k-i} \wh \theta_{k-i,j} ( X_{d_{k-i+1-j},n-i+1-j}- \wh X_{n-i+1-j}^{k-i-j}), \quad n,k\in\N. \label{pred3}
%	\end{align}
%	Now equating the coefficients in \eqref{pred2} and \eqref{pred3} yields the desired result.
\end{proof}

%We are now ready to proof Theorem~\ref{autoregressive}.
\begin{proof}[\bf Proof of Theorem \ref{autoregressive}]
{\em (i)}
%	We follow the lines of the proof of \cite{lewis}. 
It is first shown that, for all $x\in H^k$, 
\begin{align*}
\Vert 	(\wh B (k) - \Pi(k) )(x) \Vert \overset{p}{\rightarrow} 0 \qquad  (n\rightarrow \infty),
\end{align*}
where $\Pi(k)=(\pi_1,\ldots,\pi_k)^\top$ is the vector of the first $k$ operators in the invertibility representation of the functional time series $(X_j\colon j\in\Z)$. 
Define the process $(e_{j,k}\colon j\in\Z)$ by letting
\begin{align} \label{elk}
e_{j,k}=X_j-\sum_{\ell=1}^k\pi_\ell X_{j-\ell}
\end{align} 
and let $I_{H^k}$ be the identity operator on $H^k$. Note that
\begin{align*}
\wh B(k) - \Pi(k) 
&= \wh \Gamma_{1,k,d} \wh\Gamma_{k,d}^{-1} - \Pi(k) \wh\Gamma_{k,d}\wh\Gamma_{k,d}^{-1} +   \Pi(k)(I_{H^k}-P_{(k)})\\
&= \big(\Gamma_{1,k,d}-\Pi(k) \wh\Gamma_{k,d}\big)\wh\Gamma_{k,d}^{-1}+\Pi(k)(I_{H^k}-P_{(k)}).
\end{align*}
Plugging in the estimators defined in \eqref{hatgammakd} and subsequently using \eqref{elk}, it follows that
\begin{align*}
\wh B(k) - \Pi(k) 
&= \bigg(\frac{1}{n-k}\sum_{j=k}^{n-1} \big((P_{(k)}X_{j,k}\otimes X_{j+1}) -  (P_{(k)} X_{j,k}\otimes\Pi(k) X_{j,k})\big)\bigg)\wh\Gamma_{k,d}^{-1}+\Pi(k)(I_{H^k}-P_{(k)})\\
&=\bigg(\frac{1}{n-k}\sum_{j=k}^{n-1} \big(P_{(k)}X_{j,k}\otimes (X_{j+1} - \Pi(k) X_{j,k})\big)\bigg)\wh\Gamma_{k,d}^{-1}+\Pi(k)(I_{H^k}-P_{(k)})\\
&=\bigg(\frac{1}{n-k}\sum_{j=k}^{n-1} \big(P_{(k)}X_{j,k}\otimes e_{j+1,k}\big)\bigg)\wh\Gamma_{k,d}^{-1}+\Pi(k)(I_{H^k}-P_{(k)}).
\end{align*}
Two applications of the triangle inequality imply that, for all $x\in H^k$,
\begin{align}
\Vert 	(\wh B(k) - \Pi(k)) (x)  \Vert 
&\leq\bigg\Vert\bigg(\frac{1}{n-k}\sum_{j=k}^{n-1} \big(P_{(k)}X_{j}(k)\otimes e_{j+1,k}\big)\bigg)\wh\Gamma_{k,d}^{-1} (x)\bigg\Vert
+ \Vert\Pi(k)(I_{H^k}-P_{(k)})(x)\Vert \notag \\
%\end{align}
%The first summand can be treated uniformly for all $x\in H^k$ with $\Vert x\Vert \leq 1$, but the second has to be considered pointwise. Using again the triangle inequality, we get for all $x\in H^k$
%	\begin{align}
%	\Vert 	(\wh B(k) - \Pi(k)) (x)  \Vert
&\leq \bigg\Vert\bigg(\frac{1}{n-k}\sum_{j=k}^{n-1} \big(P_{(k)}X_{j}(k)\otimes (e_{j+1,k}-\varepsilon_{j+1})\big)\bigg)\wh\Gamma_{k,d}^{-1}\bigg\Vert_{\call}\notag\\
&\qquad + \bigg\Vert\bigg(\frac{1}{n-k}\sum_{j=k}^{n-1} \big(P_{(k)}X_{j}(k)\otimes \varepsilon_{j+1}\big)\bigg)\wh\Gamma_{k,d}^{-1}\bigg\Vert_{\call}+\Vert\Pi(k)(I_{H^k}-P_{(k)})(x)\Vert\notag\\
&\leq \big(\Vert U_{1n}\Vert_{\call} + \Vert U_{2n}\Vert_{\call}\big)\Vert \wh\Gamma_{k,d} ^{-1}\Vert_{\call}+\Vert\Pi(k)(I_{H^k}-P_{(k)})(x)\Vert, \label{help4}
\end{align}
where $U_{1n}$ and $U_{2n}$ have the obvious definitions. Arguments similar to those used in Proposition~6.4 of Klepsch and Kl\"uppelberg~\cite{kk} yield that the second term on the right-hand side of \eqref{help4} can be made arbitrarily small by increasing $k$. To be more precise, for $\delta>0$, there is $k_\delta\in\N$ such that
\begin{align}\label{res6}
\Vert\Pi(k)(I_{H^k}-P_{(k)})(x)\Vert<\delta
\end{align}
for all $k\ge k_\delta$ and all $x\in H^k$.
	
To estimate the first term on the right-hand side of \eqref{help4}, focus first on $ \Vert \wh\Gamma_{k,d} ^{-1}\Vert_{\call} $. Using the triangular inequality, 
%	\begin{align}
$\Vert \wh\Gamma_{k,d} ^{-1}\Vert_{\call} \leq \Vert \wh \Gamma_{k,d}^{-1} - \Gamma_{k,d}^{-1} \Vert_{\call}+\Vert \Gamma_{k,d}^{-1}\Vert_{\call}.$ %\label{res0}
%	\end{align}
Theorem 1.2 in Mitchell \cite{mitchell2} and Lemma~6.1 in Klepsch and Kl\"uppelberg \cite{kk} give the bound
\begin{align}
\Vert \Gamma_{k,d}^{-1}\Vert_{\call} \leq \alpha_{d_k}^{-1},
\label{res1}
\end{align}
where $\alpha_{d_k}$ is the infimum of the eigenvalues of all spectral density operators of $(X_{d_k,j}\colon j\in\Z)$. Furthermore, using the triangle inequality and then again Lemma~6.1 of Klepsch and Kl\"uppelberg \cite{kk}, 
\begin{align}
\Vert \wh \Gamma_{k,d}^{-1} - \Gamma_{k,d}^{-1} \Vert_{\call} 
&= \Vert \wh \Gamma_{k,d}^{-1} (\wh \Gamma_{d,k} - \Gamma_{d,k}) \Gamma_{k,d}^{-1} \Vert_{\call} \notag \\
&\leq \big( \Vert \wh \Gamma_{k,d}^{-1} - \Gamma_{k,d}^{-1} \Vert_{\call} + \Vert  \Gamma_{k,d}^{-1} \Vert_{\call} \big)  \Vert\wh \Gamma_{d,k} - \Gamma_{d,k}\Vert_{\call} \alpha_{d_k}^{-1}. 
\label{help1}
\end{align}
Hence, following arguments in the proof of Theorem~1 in Lewis and Reinsel \cite{lewis},
\begin{align*}
0 \leq \frac{\Vert \wh \Gamma_{k,d}^{-1} - \Gamma_{k,d}^{-1} \Vert_{\call} }{\alpha_{d_k}^{-1} (\Vert \wh \Gamma_{k,d}^{-1} - \Gamma_{k,d}^{-1} \Vert_{\call} +\alpha_{d_k}^{-1})} \leq \Vert\wh \Gamma_{d,k} - \Gamma_{d,k}\Vert_{\call} ,
\end{align*}
by \eqref{help1}. This yields
\begin{align}\label{help2}
\Vert\wh \Gamma_{d,k}^{-1} - \Gamma_{d,k}^{-1}\Vert_{\call} \leq \frac{\Vert\wh \Gamma_{d,k} - \Gamma_{d,k} \Vert_{\call} \alpha_{d_k}^{-2}}{1-\Vert\wh \Gamma_{d,k} - \Gamma_{d,k}\Vert_{\call}\alpha_{d_k}^{-1}}. 
\end{align}
Note that, since $P_{(k)}P_k=P_{(k)}$,
%	\begin{align*}
$\Vert \Gamma_{k,d}\Vert_{\call} = \Vert  P_{(k)}P_k \Gamma_{k} P_k P_{(k)} \Vert_{\call} \leq \Vert P_{k} \Gamma_{k} P_{k}\Vert_{\call}.$
%\end{align*}
Also, by Theorem~\ref{l4mapp}, for some positive finite constant $M_1$, 
%	\begin{align*}
 $\E[ \Vert P_k \wh\Gamma_k P_k - P_k \Gamma_k P_k \Vert^2] \leq M_1 k/{(n-k)}.$
%	\end{align*}
Therefore,
\begin{align}
\Vert  \wh\Gamma_{d,k}- \Gamma_{d,k} \Vert = O_p\bigg(\sqrt{\frac{k}{n-k}}\bigg).
\end{align}
Hence, the second part of Assumption~\ref{assumptions} %as $N\rightarrow\infty$, the denominator of the right-hand-side of \eqref{help2} satisfies $\Vert\wh \Gamma_{d,k} - \Gamma_{d,k} \Vert_{\call} \, \alpha_{d_k}^{-2} \overset{p}{\rightarrow} 0$.
and \eqref{help2} lead first to
%	 \begin{align} \label{res2}
$\Vert\wh \Gamma_{d,k}^{-1} - \Gamma_{d,k}^{-1}\Vert_{\call} \overset{p}{\rightarrow} 0$ %\quad N\rightarrow \infty$
%	\end{align}
and, consequently, combining the above arguments, 	%Combining \eqref{res0}, \eqref{res1} and \eqref{res2}, we get
\begin{align}
\Vert \wh\Gamma_{k,d} ^{-1}\Vert_{\call} = O_p(\alpha_{d_k}^{-1}). \label{res5}
\end{align}
	
Next consider $U_{1n}$ in \eqref{help4}. With the triangular and Cauchy--Schwarz inequalities, calculate
\begin{align*}
\E[\Vert U_{1n}\Vert] 
&=\E\bigg[\bigg\Vert \frac{1}{n-k} \sum_{j=k}^{n-1} P_{(k)}X_{j}(k) \otimes (e_{j+1,k}-\varepsilon_{j+1}) \bigg\Vert_{\call}\bigg] \\ 
&\leq \frac{1}{n-k} \sum_{j=k}^{N-1}\E\bigg[\bigg\Vert  P_{(k)}X_{j}(k) \otimes (e_{j+1,k}-\varepsilon_{j+1})\bigg\Vert_{\call}\bigg] \\ 
&\leq \frac{1}{n-k} \sum_{j=k}^{N-1} \big(\E[\Vert  P_{(k)}X_{j}(k)\Vert^2] \big)^{1/2} \big(\E[\Vert e_{j+1,k}-\varepsilon_{j+1}\Vert^2] \big)^{1/2}.
\end{align*}
The stationarity of $(X_j\colon j\in\Z)$ and the fact that $X_j\in L^2_H$ imply that, for a positive finite constant $M_2$, 
\begin{align}
\E[\Vert U_{1n}\Vert_\call] 
& \leq\big(\E[\Vert  P_{(k)} X_{j}(k) \Vert^2] \big)^{1/2} \big(\E[\Vert e_{j+1,k} -\varepsilon_{j+1} \Vert^2] \big)^{1/2} \notag \\
&\leq \sqrt{k} \big( \E[\Vert P_{\mathcal{V}_{d_k}} X_0 \Vert ^2] \big)^{1/2}  
\bigg(\E\bigg[ \bigg\Vert \sum_{\ell>k} \pi_\ell X_{1-\ell}
+ \sum_{\ell=1}^{k} \pi_{\ell}(I_H-P_{\mathcal{V}_{d_{k+1-\ell}}})X_{1-\ell}\bigg\Vert^2\bigg]\bigg)^{1/2} \notag\\
&\leq \sqrt{k} \bigg(2 \E\bigg[ \Vert \sum_{\ell>k} \pi_\ell X_{j+1-\ell} \Vert^2\bigg] 
+ 2\E\bigg[\bigg\Vert \sum_{\ell=1}^{k} \pi_{\ell} (I_H-P_{\mathcal{V}_{d_{k+1-\ell}}}) X_{1-\ell} \bigg\Vert^2\bigg]\bigg)^{1/2} \notag \\
&= M_2\sqrt{k(J_1 + J_2)} \notag \\
& \leq M_2\sqrt{k}(\sqrt{J_1} + \sqrt{J_2}), \label{help5}
\end{align}
where $J_1$ and $J_2$ have the obvious definition. Since for $X\in L^2_H$,  $\E[ \Vert X\Vert^2] = \Vert C_X\Vert_\caln$, the term $J_1$ can be bounded as follows. Observe that
\begin{align*}
J_1 
&=\bigg\Vert \E\bigg[ \sum_{\ell>k} \pi_\ell X_{1-\ell} \otimes  \sum_{\ell'>k} \pi_{\ell'} X_{1-\ell'} \bigg]\bigg\Vert_{\caln} \\
&=\bigg\Vert\sum_{\ell,\ell'>k} \pi_\ell C_{X;\ell-\ell'} \pi_{\ell'}^* \bigg\Vert_{\caln} \\
&\leq \sum_{\ell,\ell'>k} \Vert \pi_\ell\Vert_{\call} \Vert \pi_{\ell^\prime}\Vert_{\call} \Vert C_{X;\ell-\ell'} \Vert_{\caln}.
\end{align*}
Now $C_{X;\ell-\ell'}\in\caln$ for all $\ell,\ell'\in\Z$, hence $\Vert C_{X;\ell-\ell'} \Vert_{\caln}\leq M_3$ and $J_1\leq M_3 (\sum_{\ell>k} \Vert \pi_\ell\Vert_{\call})^2$. Concerning $J_2$, note first that, since $\E[ \Vert X\Vert^2] = \Vert C_X\Vert_\caln$,
\begin{align}
J_2 &= \bigg\Vert  \E \bigg [\sum_{\ell=1}^k \pi_\ell( I_H-P_{\mathcal{V}_{d_{k+1-\ell}}})X_{1-\ell} \otimes \sum_{\ell'=1}^n \pi_{\ell'} ( I_H-P_{\mathcal{V}_{d_{k+1-\ell'}}})X_{1-\ell'} \bigg] \bigg\Vert_{\caln}. \notag
\end{align}
Using the triangle inequality together with properties of the nuclear operator norm and the definition of $C_{X;h}$ in display \eqref{cxh} leads to 
\begin{align}
% E\big \Vert \sum_{i=1}^n \pi_i ( I_H-P_{A_{d_{n+1-i}}})X_{n+1-i}\big \Vert^2
J_2
&\leq \sum_{\ell,\ell'=1}^k  \Vert \pi_\ell\Vert_{\call} \Vert \pi_\ell'\Vert_{\call} \big\Vert \E \big[ ( I_H-P_{\mathcal{V}_{d_{k+1-\ell}}})X_{1-\ell} \otimes  ( I_H-P_{\mathcal{V}_{d_{k+1-\ell'}}})X_{1-\ell'}\big] \big\Vert_{\caln} \notag\\
&= \sum_{\ell,\ell'=1}^k  \Vert \pi_\ell\Vert_{\call} \Vert \pi_\ell'\Vert_{\call} \big\Vert ( I_H-P_{\mathcal{V}_{d_{k+1-\ell}}})C_{X;\ell-\ell'} ( I_H-P_{\mathcal{V}_{d_{k+1-\ell'}}}) \big\Vert_{\caln} \notag \\
&= \sum_{\ell,\ell'=1}^k  \Vert \pi_\ell\Vert_{\call} \Vert \pi_\ell'\Vert_{\call}  K(\ell,\ell').\label{proof1}
\end{align}
By the definition of $\mathcal{V}_d$ in \eqref{xdi} and since $( I_H-P_{\mathcal{V}_{d_{i}}})= \sum_{r>d_{i}} \nu_r\otimes\nu_r$, it follows that
\begin{align}
%	\big\Vert ( I_H-P_{A_{d_{n+1-i}}})C_{X;i-j} ( I_H-P_{A_{d_{n+1-j}}}) \big\Vert_{\caln} 
K(\ell,\ell') 
%&= \bigg\Vert \bigg(\sum_{s>d_{k+1-\ell'}} \nu_{s}\otimes \nu_{s}\bigg) C_{X;\ell-\ell'}\bigg(\sum_{r>d_{k+1-\ell}} \nu_r\otimes \nu_r\bigg)\bigg\Vert_{\caln}
%	&=\big\Vert ( I_H-P_{A_{d_{n+1-i}}}) \sum_{l>d_{n+1-j}} \nu_l\otimes C_{X;i-j}(\nu_l)\big\Vert_{\caln}\\
&=\bigg\Vert\sum_{s>d_{k+1-\ell'}} \sum_{r>d_{k+1-\ell}} \langle C_{X;\ell-\ell'} (\nu_r), \nu_{s} \rangle \nu_r\otimes \nu_{s}\bigg\Vert_{\caln} \notag \\
%\end{align*}
%	\begin{align}
	%\big\Vert ( I_H-P_{A_{d_{n+1-i}}})C_{X;i-j} ( I_H-P_{A_{d_{n+1-j}}}) \big\Vert_{\caln} 
%	K(l,l') 
&\leq \bigg\Vert\sum_{s>d_{k+1-\ell'}} \sum_{r>d_{k+1-\ell}} \sqrt{\lambda_r\lambda_{s}} \nu_r\otimes \nu_{s}\bigg\Vert_{\caln} \notag\\
&= \sum_{i=1}^{\infty} \bigg\langle \sum_{s>d_{k+1-\ell'}} \sum_{r>d_{k+1-\ell}} \sqrt{\lambda_r\lambda_{s}} \nu_r\otimes \nu_{s} (\nu_i), \nu_i \bigg\rangle \notag \\
%&= \sum_{k>d_{n+1-j}} \big\langle \sum_{l'>d_{n+1-i}} \lambda_k^{1/2}\lambda_{l'}^{1/2} \nu_{l'} , \nu_k \big\rangle \notag\\ 
%&= \sum_{i>\max(d_{k+1-l},d_{k+1-l'})} \lambda_i 
& \leq \sum_{i>d_{k+1-\ell}} \lambda_i, \label{proof2}
	\end{align}
where Lemma~6.2 in Klepsch and Kl\"uppelberg \cite{kk} was applied to give $\langle C_{X;\ell-\ell^\prime} \nu_r,\nu_s \rangle  \leq \sqrt{\lambda_r \lambda_s}$. Plugging \eqref{proof2} into \eqref{proof1}, and recalling that $\sum_{\ell=1}^{\infty} \Vert\pi_\ell\Vert_{\call}=M_4<\infty$, gives that
\begin{align}
%E\big \Vert \sum_{i=1}^n \pi_i ( I_H-P_{A_{d_{n+1-i}}})X_{n+1-i}\big \Vert^2 
J_2 &\leq M_4 \sum_{\ell=1}^k \Vert \pi_\ell \Vert_{\call} \sum_{i>d_{k+1-\ell}} \lambda_i. \label{proof6}
\end{align}
Inserting the bounds for $J_1$ and $J_2$ into \eqref{help5}, for some $M<\infty$,
\begin{align}
\E[\Vert U_{1n} \Vert]
& \leq \sqrt{k} M_2 (M_3 \sqrt{J_1}+ \sqrt{J_2}) \notag \\ 
&\leq \sqrt{k} M_2  \bigg(M_3 \sum_{\ell>k} \Vert \pi_\ell\Vert_{\call} + M_4 \sum_{\ell=1}^k \Vert \pi_\ell \Vert_{\call} \sum_{i>d_{k+1-\ell}} \lambda_i\bigg) \notag\\
&\leq \sqrt{k}M\bigg(\sum_{\ell>k} \Vert \pi_\ell\Vert_{\call} + \bigg(\sum_{\ell=1}^k \Vert \pi_\ell \Vert_{\call} \sum_{i>d_{k+1-\ell}} \lambda_i ) \bigg) \label{help7}.
\end{align}
	
Concerning $U_{2n}$ in \eqref{help4}, use the linearity of the scalar product, the independence of the innovations $(\varepsilon_j\colon j\in\mathbb{Z})$ and the stationarity of the functional time series $(X_j\colon j\in\mathbb{Z})$ to calculate 
\begin{align*}
E[\Vert U_{2n} \Vert^2] 
%&= \Big(\frac{1}{N-k}\Big)^2 \sum_{j,j'=k}^{N-1} E \big\langle P_{(k)}X_{j}(k)\otimes \varepsilon_{j+1}, P_{(k)}X_{j'}(k)\otimes \varepsilon_{j'+1} \big\rangle\\
%	&=	\Big(\frac{1}{N-k}\Big)^2 \sum_{j,j'=k}^{N-1} E \Big( \big\langle \langle P_{(k)}X_{j}(k) , \cdot\rangle P_{(k)}X_{j'}(k),\cdot\big\rangle \langle\varepsilon_{j+1},\varepsilon_{j'+1}\rangle\Big)
%	\end{align*}
%	The independence of $(\varepsilon_n)_{n\in\Z}$ then yields
%	\begin{align*}
%	E\Vert U_{2N} \Vert^2 
%	&=\Big(\frac{1}{N-k}\Big)^2 \sum_{j=k}^{N-1} E \big\langle \langle P_{(k)}X_{j}(k) , \cdot\rangle P_{(k)}X_{j'}(k),\cdot\big\rangle E\Vert\varepsilon_{j+1}\Vert^2\\
&\leq \bigg(\frac{1}{n-k}\bigg)^2 \sum_{j=k}^{n-1} \E\big[ \Vert P_{(k)}X_{j}(k)\Vert ^2\big] \E\big[\Vert\varepsilon_{j+1}\Vert^2\big] \\
%\end{align*}
%	Using the stationarity of $(X_n)_{n\in\Z}$ and $(\varepsilon_n)_{n\in\Z}$ and then the definition of $(X_j(k))$ yields
%	\begin{align*}
%	E\Vert U_{2N} \Vert^2 
&\leq \frac{1}{n-k} \E\big[ \Vert P_{(k)}X_{0}(k)\Vert ^2\big] \E\big[\Vert\varepsilon_{0}\Vert^2\big] \\
&\leq \frac{k}{n-k} \E\big[ \Vert X_{0}\Vert ^2\big] \E\big[\Vert\varepsilon_{0}\Vert^2\big].
\end{align*}
Since both $(X_j\colon j\in\Z)$ and $(\varepsilon_j\colon j\in\Z)$ are in $L^2_H$, \eqref{res5} implies that
\begin{align*}
\Vert U_{2n} \Vert_\call \Vert \Vert \wh\Gamma_{k,d} ^{-1}\Vert_{\call} 
= O_p\bigg(\frac{1}{\alpha_{d_k}} \sqrt{\frac{k}{n-k}}\bigg).
	\end{align*}
Furthermore, \eqref{res5} and \eqref{help7} show that
\begin{align*}
\Vert U_{1n} \Vert_\call \Vert \Vert \wh\Gamma_{k,d} ^{-1}\Vert_{\call} 
= O_p\bigg(\frac{\sqrt{k}}{\alpha_{d_k}} \bigg(\sum_{\ell>k} \Vert \pi_\ell\Vert_{\call} 
+ \sum_{\ell=1}^k \Vert \pi_\ell \Vert_{\call} \sum_{i>d_{k+1-\ell}} \lambda_i  \bigg)\bigg).
\end{align*}
Thus Assumption~\ref{assumptions}, \eqref{help4} and \eqref{res6} assert that, for all $x\in H^k$,
%\begin{align*}
$\Vert 	\wh B_k - \Pi(k) (x) \Vert \overset{p}{\rightarrow} 0$, %\quad N\rightarrow \infty.
%$\end{align*}
which proves the first statement of the theorem.

{\em (ii)} First note that, for all $x\in H^k$,
%\begin{align*}
$\Vert (\wh\beta_{k,i}-\beta_{k,i})(x)\Vert \leq \Vert  (\wh\beta_{k,i} - \pi_i)(x)\Vert + \Vert (\pi_i-\beta_{k,i})(x)\Vert \overset{p}{\rightarrow } 0$
%\end{align*}
as $n\to\infty$. Now %by Proposition~\ref{prophelp}(i) 
$\theta_{k,1} = \beta_{k,1}$ and by Corollary~\ref{corhelp} $\wh\theta_{k,1}=\wh\beta_{k,1}$. Since furthermore $\sum_{j=1}^{k} \pi_j \psi_{k-j}=\psi_k$ (see, for instance, the proof of Theorem~5.3 in Klepsch and Kl\"uppelberg \cite{kk}), $\psi_1=\pi_1$. Therefore,
\begin{align*}
\Vert (\wh\theta_{k,1}-\psi_1)(x)\Vert = \Vert(\wh\beta_{k,1}-\pi_1)(x)\Vert \overset{p}{\rightarrow} 0
\end{align*}
as $n\to\infty$. This proves the statement for $i=1$. %by Proposition~\ref{autoregressive} for $i=1$. This proves the statement for $i=1$. 
Proceed by assuming the statement of the theorem is true for $i=1,\dots,N\in\N$, and then use induction on $N$. Indeed, for $i=N+1$,  the triangle inequality yields, for all $x\in H$,
\begin{align*}
\Vert (\wh\theta_{k,N+1}-\psi_{N+1})(x)\Vert
& = \bigg\Vert \bigg(\sum_{j=1}^{N+1} \wh\beta_{k,j} \wh\theta_{k-j,N+1-j}  -  \pi_j \psi_{N+1-j}\bigg)(x)\bigg\Vert \\
&\leq \sum_{j=1}^{N+1}  \Vert  (\wh\beta_{k,j}-\pi_j) \wh\theta_{k-j,N+1-j}(x)\Vert + \Vert \pi_j(\wh\theta_{k-j,N+1-j}- \psi_{N+1-j})(x)\Vert.
\end{align*}
Now, for $n\rightarrow\infty$, the first summand converges in probability to $0$ by part {\em (i)}, while the second summand converges to $0$ in probability by induction. Therefore the statement is proven.
\end{proof}

\begin{proof}[\bf Proof of Theorem~\ref{autoregressive2}]
{\em (i)} The proof is based again on showing that, for all $x\in H^k$, $\Vert (\wh{\wh B}(k) - \Pi (k)) (x) \Vert \overset{p}{\rightarrow} 0$ as $n\rightarrow\infty$, where $\wh {\wh B}(k)= (\wh{\wh \beta}_{k,1}, \dots, \wh {\wh \beta}_{k,k})$. To this end, first note that
\begin{align}
\Vert (\wh{\wh B}(k) - \Pi(k))(x) \Vert 
\leq 	\Vert (\wh{\wh B}(k) - \wh B (k) )(x) \Vert 
+ \Vert (\wh B(k) - \Pi(k)) (x) \Vert. 
\label{ausgang2}
\end{align}
Under Assumptions~\ref{assumptions}, the second term of the right-hand side converges to $0$ in probability for all $x\in H^k$ by part {\em (i)} of Theorem~\ref{autoregressive}. The first term of the right-hand side of \eqref{ausgang2} can be investigated uniformly over $ H^k$. Using the plug-in estimators defined as in \eqref{yulewalkerhathat},  we get for $k\in\N$
\begin{align}
\Vert \wh{\wh B}(k) - \wh B (k)  \Vert_\call 
&= \Vert  \wh{\wh\Gamma}_{1,k,d}\wh{\wh \Gamma}_{k,d}^{-1}-\wh \Gamma_{1,k,d} \wh { \Gamma}_{k,d}^{-1}\Vert_\call\notag\\
&\leq \Vert \big( \wh{\wh\Gamma}_{1,k,d} -\wh \Gamma_{1,k,d} \big) \wh{\wh \Gamma}_{k,d}^{-1}\Vert_\call 
+ \Vert \wh \Gamma_{1,k,d} \big( \wh{ \Gamma}_{k,d}^{-1}-\wh {\wh \Gamma}_{k,d}^{-1}\big)\Vert_\call. \label{ausgang1}
\end{align}
Following the same intuition as in the proof of Theorem~\ref{autoregressive}, start by investigating the term $ \Vert( \wh{ \Gamma}_{k,d}-\wh {\wh \Gamma}_{k,d})\Vert_\call $. Applying triangle inequality, linearity of the inner product and the inequalities $\Vert P_{(k)}X_{j}(k)\Vert \leq \Vert X_j(k)\Vert$ and $\Vert \wh P_{(k)}X_{j}(k)\Vert \leq \Vert X_j(k)\Vert$, it follows that
\begin{align}
\Vert( \wh{ \Gamma}_{k,d}-\wh {\wh \Gamma}_{k,d})\Vert_\call 
&= \bigg\Vert \frac{1}{n-k} 
\sum_{j=k}^{n-1} \big(P_{(k)}X_j(k)\otimes P_{(k)}X_j(k) - \wh P_{(k)}X_j(k)\otimes \wh P_{(k)}X_j(k) \big)\bigg\Vert_\call\notag\\
%&\leq  \frac{1}{n-k} \sum_{j=k}^{n-1} \big\Vert P_{(k)}X_j(k) \big \Vert  \big\Vert P_{(k)}X_j(k)-\wh P_{(k)}X_j(k) \big\Vert \notag\\
%&\qquad+ \frac{1}{n-k} \sum_{j=k}^{n-1} \big\Vert P_{(k)}X_j(k)- \wh P_{(k)}X_j(k)\big\Vert \big\Vert \wh P_{(k)}X_j(k) \big\Vert\notag\\
&\leq \frac{2}{n-k} \sum_{j=k}^{n-1} \big\Vert X_j(k) \big \Vert  \big\Vert P_{(k)}X_j(k)-\wh P_{(k)}X_j(k) \big\Vert.\label{help6}
\end{align} 
Note that, from the definitions of $X_j(k)$, $P_{(k)}$ and $\wh{P}_{(k)}$, 
\begin{align*}
P_{(k)}X_j(k)
%= (P_{A_{d_k}}X_j,P_{A_{d_{k-1}}}X_{j-1}, \dots,P_{A_{d_1}}X_{j-k})^\top 
&=\bigg(\sum_{i=1}^{d_k}\langle X_j,\nu_i\rangle \nu_i,\ldots,\sum_{i=1}^{d_1}\langle X_{j-k},\nu_i\rangle \nu_i\bigg)^\top, \\
%\label{pkx} 
%\end{align}
%and similarly
%	\begin{align}
\wh P_{(k)}X_j(k)
%= (\wh P_{A_{d_k}}X_j,\wh P_{A_{d_{k-1}}}X_{j-1}, \dots,\wh P_{A_{d_1}}X_{j-k})^\top 
&=\bigg(\sum_{i=1}^{d_k}\langle X_j,\wh \nu_i\rangle \wh\nu_i,\dots,\sum_{i=1}^{d_1}\langle X_{j-k},\wh\nu_i\rangle \wh\nu_i\bigg)^\top.
%\label{whpkx}
\end{align*}
These relations show that 
\begin{align*}
\big\Vert P_{(k)}X_j(k)- \wh P_{(k)}X_j(k)\big\Vert 
&= \bigg\Vert\bigg(\sum_{i=1}^{d_k}\langle X_j,\wh \nu_i\rangle \wh\nu_i - \langle X_j,\nu_i\rangle \nu_i,\dots,\sum_{i=1}^{d_1}\langle X_{j-k},\wh\nu_i\rangle \wh\nu_i-\langle X_{j-k},\nu_i\rangle \nu_i\bigg)^\top\bigg\Vert \\
%	\end{align*}
%	Therefore, using the triangle inequality, we get
%	\begin{align*}	
%\big\Vert P_{(k)}X_j(k)- \wh P_{(k)}X_j(k)\big\Vert	
&= \bigg\Vert\bigg(\sum_{i=1}^{d_k}\langle X_j, \wh\nu_i-\nu_i\rangle \wh\nu_i  ,\dots,\sum_{i=1}^{d_1}\langle X_{j-k},\wh\nu_i-\nu_i\rangle \wh\nu_i\bigg)^\top\bigg\Vert\\
&\qquad+\bigg\Vert\bigg(\sum_{i=1}^{d_k}\langle X_j,\nu_i\rangle (\nu_i-\wh\nu_i),\dots,\sum_{i=1}^{d_1}\langle X_{j-k},\nu_i\rangle (\nu_i-\nu_i)\bigg)^\top\bigg\Vert.
\end{align*}
Observe that, for $x=(x_1,\dots,x_k)\in H^k$,  $\Vert x\Vert = (\sum_{i=1}^{k} \Vert x_i\Vert ^2 )^{1/2}$, Then, applications of the Cauchy--Schwarz inequality and the orthonormality of $(\nu_i\colon i\in\N)$ and $(\wh\nu_i\colon i\in\N)$ lead to %\JK{second summand: show that the squred norm of the sum is the sum of the squared norm!}
\begin{align}
\big\Vert P_{(k)}X_j(k)- \wh P_{(k)}X_j(k)\big\Vert
&\leq \bigg(\sum_{i=0}^{k-1} \bigg\Vert\sum_{i=1}^{d_i} \langle X_{j-i}, \wh\nu_l-\nu_l\rangle \wh\nu_l\bigg\Vert^2 \bigg)^{1/2} 
+ \bigg(\sum_{i=0}^{k-1} \bigg\Vert\sum_{l=1}^{d_i} \langle X_{j-i},\nu_l\rangle (\nu_l-\wh\nu_l)\bigg\Vert^2 \bigg) ^{1/2} \notag\\
&\leq \bigg(\sum_{i=0}^{k-1} \sum_{l=1}^{d_i} \Vert X_{j-i}\Vert^2 \Vert \wh\nu_l-\nu_l\Vert^2 \bigg)^{1/2} 
+ \bigg(\sum_{i=0}^{k-1} \sum_{l=1}^{d_i} \Vert X_{j-i}\Vert^2 \Vert \nu_l-\wh\nu_l \Vert ^2  \bigg) ^{1/2}\notag\\
&\leq 2 \bigg(\sum_{i=0}^{k-1} \sum_{l=1}^{d_k} \Vert X_{j-i}\Vert^2 \Vert \wh\nu_l-\nu_l\Vert^2 \bigg)^{1/2} \notag\\
&\leq 2\Vert X_j(k)\Vert \bigg(  \sum_{l=1}^{d_k} \Vert \wh\nu_l-\nu_l\Vert^2 \bigg)^{1/2}. \notag
%\label{help8}
\end{align}
Plugging this relation back into \eqref{help6}, it follows that
\begin{align}
\Vert \wh{ \Gamma}_{k,d}-\wh {\wh \Gamma}_{k,d}\Vert_\call 
&\leq 4 \bigg(  \sum_{l=1}^{d_k} \Vert \wh\nu_l-\nu_l\Vert^2 \bigg)^{1/2}
\frac{2}{n-k} \sum_{j=k}^{n-1}\Vert X_j(k)\Vert^2. \notag  % \Vert X_j(k) \Vert \notag.
\end{align}
Since $(X_j\colon j\in\Z)$ is $L^4$-$m$ approximable, Theorems~3.1 and 3.2 in H\"ormann and Kokoszka~\cite{weaklydep} imply that, for some finite positive constant $C_1$, $N\E[\Vert \wh\nu_l-\nu_l\Vert^2] \leq C_1/\delta_l$, where $\delta_l$ is the $l$-th spectral gap. Hence, % defined as in Lemma~\ref{lem:specgap}. Hence 
\begin{align*}
\sum_{l=1}^{d_k} \Vert \wh\nu_l-\nu_l\Vert^2  \leq \frac{C_1}{N} \sum_{l=1}^{d_k} \frac{1}{\alpha_l^{2}}.
\end{align*}
Furthermore, note that % an application of the Cauchy--Schwarz inequality yields
\begin{align*}
\frac{2}{n-k}\sum_{j=k}^{n-1} \E\big[\Vert X_j(k)\Vert^2\big] 
%&\leq (N-k)^{-1} \sum_{j=k}^{N-1} 2\, ( E\Vert X_j(k)\Vert^2 ) ^{1/2} (E   \Vert X_j(k) \Vert ^2 )^{1/2}\\
&\leq 2 \sum_{i=0}^{k-1} \E\big[\Vert X_{k-i} \Vert ^2\big] = 2 k \Vert C_X\Vert_\caln.
	\end{align*}
Therefore, collecting the previous results yields the rate
\begin{align}
\Vert \wh{ \Gamma}_{k,d}-\wh {\wh \Gamma}_{k,d}\Vert_\call
= O_p \bigg( \frac{k}{n}\bigg(\sum_{l=1}^{d_k} \frac{1}{\alpha_l^{2}}\bigg)^{1/2}\bigg). 
\label{res4}
\end{align}

Next, investigate $\Vert \wh {\wh \Gamma}_{k,d}^{-1} \Vert_\call$. Similarly as in the corresponding part of the proof of Theorem~\ref{autoregressive}, it follows that
%	\begin{align}
$\Vert \wh {\wh \Gamma}_{k,d}^{-1} \Vert_\call \leq \Vert \wh {\wh \Gamma}_{k,d}^{-1}- {\wh \Gamma}_{k,d}^{-1}\Vert_\call + \Vert \wh { \Gamma}_{k,d}^{-1} \Vert_\call$. %\label{help9}
%\end{align}
By \eqref{res5}, $ \Vert \wh { \Gamma}_{k,d}^{-1} \Vert_\call = O_p(\alpha_{d_k}^{-1})$. Furthermore, the same arguments as in \eqref{help1} and \eqref{help2} imply that
\begin{align}
\Vert \wh {\wh \Gamma}_{k,d}^{-1}- {\wh \Gamma}_{k,d}^{-1}\Vert_\call \leq \frac{\Vert\wh {\wh\Gamma}_{d,k} - \wh\Gamma_{d,k} \Vert_{\call} \Vert  \wh\Gamma_{k,d}^{-1}\Vert_{\call}^{2}}{1-\Vert\wh {\wh\Gamma}_{d,k} - \wh\Gamma_{d,k}\Vert_{\call}\Vert\wh\Gamma_{k,d}^{-1}\Vert_{\call}}. 
\label{help10}
\end{align}
Hence, by \eqref{res5} and \eqref{res4},
\begin{align*}
\Vert\wh {\wh\Gamma}_{d,k} - \wh\Gamma_{d,k} \Vert_{\call} \Vert  \wh\Gamma_{k,d}^{-1}\Vert_{\call}^{2} 
= O_p\bigg(\frac{k}{n\alpha_{d_k}^{2}}\bigg(\sum_{l=1}^{d_k} \frac{1}{\alpha_l^{2}}\bigg)^{1/2}\bigg).
\end{align*} 
Therefore, by Assumption~\ref{ass2} as $n\rightarrow \infty$,
%	\begin{align}
$\Vert \wh {\wh \Gamma}_{k,d}^{-1}- {\wh \Gamma}_{k,d}^{-1}\Vert_\call \overset{p}{\rightarrow} 0$. %\label{res7}
%	\end{align}
Taken the previous calculations together, this gives the rate
\begin{align}
\Vert \wh {\wh \Gamma}_{k,d}^{-1} \Vert_\call = O_p\bigg(\frac{1}{\alpha_{d_k}}\bigg) \label{res8}.
\end{align}
Going back to \eqref{ausgang1} and noticing that $ \Vert\wh{\wh\Gamma}_{1,k,d} -\wh \Gamma_{1,k,d}\Vert_\call \leq \Vert(I_H,0,\dots,0)(\wh{\wh\Gamma}_{k,d} -\wh \Gamma_{k,d})\Vert_\call$, the first summand in this display can be bounded by
\begin{align}
\Vert \big( \wh{\wh\Gamma}_{1,k,d} -\wh \Gamma_{1,k,d} \big) \wh{\wh \Gamma}_{k,d}^{-1}\Vert_\call 
&\leq \Vert  \wh{\wh\Gamma}_{1,k,d} -\wh \Gamma_{1,k,d} \Vert_\call \Vert \wh{\wh \Gamma}_{k,d}^{-1}\Vert_\call\notag \\
&\leq  \Vert  (I_H,0,\dots,0)(\wh{\wh\Gamma}_{k,d} -\wh \Gamma_{k,d}) \Vert_\call \Vert \wh{\wh \Gamma}_{k,d}^{-1}\Vert_\call\notag\\
&=O_p\bigg(\frac{k}{n\alpha_{d_k}}\bigg(\sum_{l=1}^{d_k} \frac{1}{\alpha_l^{2}}\bigg)^{1/2}\bigg), \label{res9}
\end{align}
where the rate in \eqref{res4} was used in the last step. For the second summand in \eqref{ausgang1}, use the plug-in estimator for $\wh\Gamma_{1,k,d}$ to obtain, for all $k<n$,
\begin{align*}
\Vert \wh \Gamma_{1,k,d} \big( \wh{ \Gamma}_{k,d}^{-1}-\wh {\wh \Gamma}_{k,d}^{-1}\big)\Vert_\call 
&\leq \bigg\Vert\frac{1}{n-k} \sum_{j=k}^{n-1} P_{(k)} X_{j}(k) \otimes X_{j+1}\bigg \Vert_\call 
\big\Vert\wh{ \Gamma}_{k,d}^{-1}-\wh {\wh \Gamma}_{k,d}^{-1}\Vert_\call.
\end{align*}
Since 
\begin{align*}
\E\bigg[\bigg\Vert \frac{1}{n-k} \sum_{j=k}^{n-1} P_{(k)} X_{j}(k) \otimes X_{j+1}\bigg\Vert_\call\bigg] 
&\leq \frac{1}{n-k} \sum_{j=k}^{n-1} \E\big[\Vert P_{(k)} X_{j}(k) \otimes X_{j+1} \Vert_\call\big] \\
&\leq \frac{1}{n-k} \sum_{j=k}^{n-1} \big(\E[\Vert P_{(k)} X_{j}(k) \Vert ^2]\big)^{1/2}\big(\E[ \Vert  X_{j+1} \Vert^2]\big)^{1/2} \\ 
&= \bigg( \sum_{l=0}^{k-1} \E[ \Vert X_{j-l} \Vert^2] \bigg)^{1/2} \Vert C_X \Vert_\caln ^{1/2}\\
&= \sqrt{k} \Vert C_X\Vert_{\caln},
\end{align*}
the result in \eqref{help10} implies that
\begin{align}
\big\Vert \wh \Gamma_{1,k,d} \big( \wh{ \Gamma}_{k,d}^{-1}-\wh {\wh \Gamma}_{k,d}^{-1}\big) \big\Vert_\call 
=  O_p\bigg(\frac{k^{3/2}}{n\alpha_{d_k}^{2}}\bigg(\sum_{l=1}^{d_k} \frac{1}{\alpha_l^{2}}\bigg)^{1/2}\bigg). \label{res10}
\end{align}
Applying Assumption~\ref{ass2} to this rate and collecting the results in \eqref{ausgang2}, \eqref{ausgang1}, \eqref{res9} and \eqref{res10}, shows that, for all $x\in H^k$ as $n\rightarrow\infty$,  $\Vert (\wh{\wh B}(k) - \Pi (k)) (x) \Vert \overset{p}{\rightarrow} 0$. This is the claim.

{\em (ii)} Similar to the proof of part {\em (ii)} of Theorem~3.6.
\end{proof}

%%%%%%%%%%%%%%%%%%%%%%%%
%\bibliographystyle{plain}
\bibliography{bibliography}

\begin{thebibliography}{10}

\bibitem{avd}
A.~Aue and A.~Van Delft.
\newblock Testing for stationarity of functional time series in the frequency
  domain.
\newblock {\em Preprint}, 2017.

\bibitem{AHHR}
A.~Aue, S.~H\"ormann, L.~Horv\'ath, and M.~Reimherr.
\newblock Detecting changes in the covariance structure of multivariate time
  series.
\newblock {\em The Annals of Statistics}, 37:4046--4087, 2009.

\bibitem{aue}
A.~Aue, D.~Dubart Norinho, and S.~H\"ormann.
\newblock On the prediction of stationary functional time series.
\newblock {\em Journal of the American Statistical Association}, 110:378--392,
  2015.

\bibitem{bosq}
D.~Bosq.
\newblock {\em Linear Processes in Function Spaces: Theory and Applications}.
\newblock Springer, New York, 2000.

\bibitem{bosq2014}
D.~Bosq.
\newblock Computing the best linear predictor in a {H}ilbert space.
  {A}pplications to general {ARMAH} processes.
\newblock {\em Journal of Multivariate Analysis}, 124:436--450, 2014.

\bibitem{brockwell}
P.J. Brockwell and R.A. Davis.
\newblock {\em Time Series: Theory and Methods (2nd Ed.)}.
\newblock Springer, New York, 1991.

\bibitem{fortet}
R.~Fortet.
\newblock {\em Vecteurs, fonctions et distributions a\`atoires dans les espaces
  de Hilbert}.
\newblock Hermes, Paris, 1995.

\bibitem{gabrys}
R.~Gabrys and P.~Kokoszka.
\newblock Portmanteau test of independence for functional observations.
\newblock {\em Journal of the American Statistical Association},
  102(480):1338--1348, 2007.

\bibitem{hoermann}
S.~H\"ormann, L.~Kidzinski, and M.~Hallin.
\newblock Dynamic functional principal components.
\newblock {\em Journal of the Royal Statistical Society: Series B},
  77:319--348, 2015.

\bibitem{weaklydep}
S.~H\"ormann and P.~Kokoszka.
\newblock Weakly dependent functional data.
\newblock {\em The Annals of Statistics}, 38:1845--1884, 2010.

\bibitem{horvath}
L.~Horv\`ath and P.~Kokoszka.
\newblock {\em Inference for Functional Data with Applications}.
\newblock Springer, New York, 2012.

\bibitem{hsing}
T.~Hsing and R.~Eubank.
\newblock {\em Theoretical Foundations of Functional Data Analysis, with an
  Introduction to Linear Operators}.
\newblock Wiley, West Sussex, UK, 2015.

\bibitem{kk}
J.~Klepsch and C.~Kl\"uppelberg.
\newblock An {I}nnovations {A}lgorithm for the prediction of functional linear
  processes.
\newblock {\em eprint arXiv:1607.05874}, 2016.

\bibitem{KKW}
J.~Klepsch, C.~Kl\"uppelberg, and T.~Wei.
\newblock Prediction of functional {ARMA} processes with an application to
  traffic data.
\newblock {\em Econometrics and Statistics}, 1:128--149, 2016.

\bibitem{kokoreim}
P.~Kokoszka and M.~Reimherr.
\newblock Determining the order of the functional autoregressive model.
\newblock {\em Journal of Time Series Analysis}, 34:116--129, 2013.

\bibitem{lai}
T.L. Lai and C.P. Lee.
\newblock Information and prediction criteria for model selection in stochastic
  regression and {ARMA} models.
\newblock {\em Statistica Sinica}, 7:285--309, 1997.

\bibitem{lewis}
R.~Lewis and G.C. Reinsel.
\newblock Prediction of {M}ultivariate {T}ime {S}eries by {A}utoregressive
  {M}odel {F}itting.
\newblock {\em Journal of Multivariate Analysis}, 16:393--411, 1985.

\bibitem{merlevede}
F.~Merlev\`ede.
\newblock Sur l'inversibilit\'e des processus lin\'eaires \`a valeurs dans un
  espace de {H}ilbert.
\newblock {\em Comptes rendus de l'Acad\'emie des Sciences, S\'erie I},
  321:477--480, 1995.

\bibitem{mitchell2}
H.~Mitchell.
\newblock {\em Topics in Multiple Time Series}.
\newblock PhD thesis, Royal Melbourne Institute of Technology, 1996.

\bibitem{mitchell}
H.~Mitchell and P.J. Brockwell.
\newblock Estimation of the coefficients of a multivariate linear filter using
  the {I}nnovations {A}lgorithm.
\newblock {\em Journal of Time Series Analysis}, 18:157--179, 1997.

\bibitem{nsiri}
S.~Nsiri and R.~Roy.
\newblock On the invertibility of multivariate linear processes.
\newblock {\em Journal of Time Series Analysis}, 14:305--316, 1993.

\bibitem{panaretos}
V.~Panaretos and S.~Tavakoli.
\newblock Fourier analysis of stationary time series in function space.
\newblock {\em The Annals of Statistics}, 41:568--603, 2012.

\bibitem{ramsay1}
J.O. Ramsay and B.W. Silverman.
\newblock {\em Functional Data Analysis (2nd ed.)}.
\newblock Springer Series in Statistics, 2005.

\bibitem{simon}
B.~Simon.
\newblock {\em Operator Theory --- A Comprehensive Course in Analysis, Part 4}.
\newblock AMS, 2015.

\bibitem{spangenberg}
F.~Spangenberg.
\newblock Strictly stationary solutions of {ARMA} equations in {B}anach spaces.
\newblock {\em Journal of Multivariate Analysis}, 121:127--138, 2013.

\bibitem{tsai}
R.S. Tsai.
\newblock {\em Multivariate Time Series Analysis}.
\newblock Wiley, Hoboken, 2014.

\bibitem{turbillon2}
C.~Turbillon, D.~Bosq, J.M. Marion, and B.~Pumo.
\newblock Parameter estimation of moving averages in {H}ilbert spaces.
\newblock {\em Comptes rendus de l'Acad\'emie des Sciences, S\'erie I},
  346:347--350, 2008.

\end{thebibliography}
%%%%%%%%%%%%%%%%%%%%%%%%

\end{document}